\title{Bounding the Dimension of Points on a Line}
\author{
	Neil Lutz\footnote{Research supported in part by National Science Foundation Grant 1445755.}\\
	Department of Computer Science, Rutgers University\\
	Piscataway, NJ 08854, USA\\
	\texttt{njlutz@rutgers.edu}
	\and
	D. M. Stull\footnote{Research supported in part by National Science Foundation Grants 1247051 and 1545028.}\\
	Department of Computer Science, Iowa State University\\
	Ames, IA 50011, USA\\
	\texttt{dstull@iastate.edu}
}
\newtheorem{thm}{Theorem}
\newtheorem{obs}[thm]{Observation}
\newtheorem{lem}[thm]{Lemma}
\newtheorem{cor}[thm]{Corollary}
\newcommand{\R}{\mathbb{R}}
\newcommand{\Z}{\mathbb{Z}}
\newcommand{\N}{\mathbb{N}}
\newcommand{\Q}{\mathbb{Q}}
\newcommand{\ve}{\varepsilon}
\newcommand{\uhr}{{\upharpoonright}}
\begin{document}
	\maketitle
	\begin{abstract}
		We use Kolmogorov complexity methods to give a lower bound on the effective Hausdorff dimension of the point $(x,ax+b)$, given real numbers $a$, $b$, and $x$. We apply our main theorem to a problem in fractal geometry, giving an improved lower bound on the (classical) Hausdorff dimension of generalized sets of Furstenberg type.
	\end{abstract}
	\section{Introduction}\label{sec:intro}
		
		In this paper we exploit fundamental connections between fractal geometry and information theory to derive both algorithmic and classical dimension bounds in the Euclidean plane.
	
		Effective fractal dimensions, originally conceived by J. Lutz to analyze computational complexity classes~\cite{Lutz03a,Lutz03b}, quantify the \emph{density of algorithmic information} in individual infinite data objects. Although these dimensions were initially defined---and have primarily been studied---in Cantor space~\cite{DowHir10}, they have been shown to be geometrically meaningful in Euclidean spaces and general metric spaces, and their behavior in these settings has been an active area of research (e.g.,~\cite{GLMP14,Mayo12,DLMT14}).
		
		This paper focuses on the \emph{effective Hausdorff dimension}, $\dim(x)$, of individual points $x\in\R^n$, which is a potentially non-zero value that depends on the Kolmogorov complexity of increasingly precise approximations of $x$~\cite{Mayo02}. Given the pointwise nature of this quantity, it is natural to investigate the \emph{dimension spectrum} of a set $E\subseteq\R^n$, i.e., the set $\{\dim(x):x\in E\}$. Even for apparently simple sets, the structure of the dimension spectrum may not be obvious, as exemplified by a longstanding open question originally posed by J. Lutz~\cite{Mill16}: \emph{Is there a straight line $L\subseteq\R^2$ such that every point on $L$ has effective Hausdorff dimension 1?}

		J. Lutz and Weihrauch~\cite{LutWei08} have shown that the set of points in $\R^n$ with dimension less than 1 is totally disconnected, as is the set of points with dimension greater than $n-1$. Turetsky has shown that the set of points in $\R^n$ of dimension exactly 1 is connected~\cite{Ture11}, which implies that every line in $\R^2$ contains a point of dimension 1.  J. Lutz and N. Lutz have shown that almost every point on any line with random slope has dimension 2~\cite{LutLut17}, despite the surprising fact that there are lines in every direction that contain no random points~\cite{LutLut15a}. These results give insight into the dimension spectra of lines, but they also leave open the question of whether or not a line in $\R^2$ can have a singleton dimension spectrum.
		
		We resolve this question in the negative with our main theorem, a general lower bound on the dimension of points on lines in $\R^2$. Our bound depends only on the dimension of the description $(a,b)$ of the line (i.e., the ordered pair giving the line's slope and vertical intercept) and the dimension of the coordinate $x$ relative to $(a,b)$.
		\begin{thm}
			\label{thm:main}
			For all $a,b,x\in\R$,
			\[\dim(x,ax+b)\geq \dim^{a,b}(x)+\min\big\{\dim(a,b),\,\dim^{a,b}(x)\big\}\,.\]
			In particular, for almost every $x\in\R$,
			$\dim(x,ax+b)=1+\min\{\dim(a,b),1\}$.
		\end{thm}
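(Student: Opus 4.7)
The plan is to use the Kolmogorov complexity characterization of effective Hausdorff dimension. For $z \in \R^n$, let $K_r(z)$ denote the prefix Kolmogorov complexity of a best dyadic approximation to $z$ within $2^{-r}$, and let $K_r^w(z)$ denote the relativization to an oracle for $w$ at arbitrary precision; then $\dim(z) = \liminf_{r\to\infty} K_r(z)/r$ and $\dim^w(z) = \liminf_r K_r^w(z)/r$. Writing $p = (x, ax+b)$, $\alpha = \dim^{a,b}(x)$, and $\beta = \dim(a,b)$, I aim to establish the finite-precision inequality
\[ K_r(p) \ge K_r^{a,b}(x) + \min\{K_r(a, b),\, K_r^{a,b}(x)\} - o(r), \]
which, dividing by $r$ and taking liminfs, yields the stated bound on $\dim(p)$.

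A reformulation via symmetry of information clarifies the target. Since $(x', a', b') \mapsto (x', a'x'+b')$ is Lipschitz on bounded regions, a precision-$r$ approximation of $p$ is equivalent up to $O(\log r)$ bits to one of $(x, a, b)$, so
\[ K_r(p, a, b) = K_r(a, b) + K_r^{a,b}(x) + O(\log r). \]
Combining with the Kolmogorov--Levin identity $K_r(p, a, b) = K_r(p) + K_r(a, b \mid p) + O(\log r)$ reduces the task to proving
\[ K_r(a, b \mid p) \le \max\{0,\, K_r(a, b) - K_r^{a,b}(x)\} + o(r). \tag{$\ast$} \]

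Inequality $(\ast)$ is the main obstacle and carries the geometric content of the theorem. Given exact $p = (x, y)$, the parameters $(a, b)$ are constrained to the affine line $b = y - ax$ in parameter space; at precision $r$ this broadens to a thin strip, so the naive bound $K_r(a, b \mid p) \le r + O(\log r)$ merely reflects position along the strip. The improvement exploits the hypothesis that $x$ has complexity only $\alpha r + o(r)$ relative to $(a, b)$, so that among the candidate pairs covering the strip, only those compatible with some low-complexity first coordinate $x'$ are admissible. I plan to make this precise via a multi-precision enumeration: list all $(a', b')$ at precision $r$ whose induced line passes within $2^{-r}$ of $p$, and filter to those for which the corresponding $x'$ has sufficiently low $(a', b')$-conditional complexity, thereby bounding the list size by $2^{\max\{0,\,K_r(a,b) - K_r^{a,b}(x)\}+o(r)}$. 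Making the filtering effective despite the circular dependence of the complexity condition on $(a', b')$ is the delicate point.

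Granting $(\ast)$, the proof concludes by case analysis: when $\beta \le \alpha$ the $\max$ vanishes asymptotically and $\dim(p) \ge \alpha + \beta$; when $\beta > \alpha$ one obtains $\dim(p) \ge 2\alpha$. The ``in particular'' statement follows because Lebesgue-almost every $x$ satisfies $\dim^{a,b}(x) = 1$ (by standard relativization of Martin-L\"of randomness), and the matching upper bound $\dim(p) \le \beta + 1$ comes directly from the first display combined with the ambient bound $\dim(p) \le 2$.
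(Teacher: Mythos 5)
Your reduction is sound in outline: writing the chain rule in the two ways you describe does reduce the theorem to bounding $K_r(a,b\mid p)$, and the liminf of the resulting precision-$r$ inequality (if true) would give the stated bound even when $K_r(a,b)$ and $K_r^{a,b}(x)$ hit their liminfs along different subsequences. But the inequality $(\ast)$ is precisely where the theorem's entire content lives, and your plan for proving it has a genuine gap that the paper's proof circumvents by a different mechanism.

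The problem is that the ``search over short programs'' idea, applied directly at precision $r$, does not pin down $(a,b)$. Suppose you enumerate programs of length at most $K_r(a,b)+O(\log r)$ and keep the first one whose output $(u,v)$ is consistent with the constraint line determined by $p$. To conclude that $(u,v)$ is close to $(a,b)$, you need a lower bound on $K_r(u,v)$ in terms of the distance $2^{-t}=\|(a,b)-(u,v)\|$. The paper's Lemma~\ref{lem:lines} gives $K_r(u,v)\geq K_t(a,b)+K_{r-t,r}(x\mid a,b)-O(\log r)$, which is exactly the geometric fact you allude to. But plugging in $K_r(u,v)\leq K_r(a,b)+O(\log r)$ yields only $K_r(a,b)-K_t(a,b)\geq K_{r-t,r}(x\mid a,b)-O(\log r)$, and since $K_r(a,b)-K_t(a,b)$ can be as large as $2(r-t)$ while $K_{r-t,r}(x\mid a,b)\leq r-t$, this places no useful constraint on $t$. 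In other words, there may be low-complexity ``decoy'' lines through $(x,ax+b)$ arbitrarily far from $(a,b)$, and your filtering criterion, even if it could be made effective, would not cut them out. You flag the circularity of the complexity-based filter as the delicate point, but the deeper obstacle is that the filter's selectivity is not established, and in fact the natural counting bound you propose does not hold without further structure.

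What the paper does instead is work relative to a carefully chosen precision-dependent oracle $D_r$ (Lemma~\ref{lem:oracles}) that \emph{lowers} $K_s(a,b)$ to $\min\{\eta r, K_s(a,b)\}+O(\log r)$ for $s\leq r$, for a fixed rational $\eta$ slightly below $\min\{\dim(a,b),\dim^{a,b}(x)\}$, while leaving $K_{\cdot,r}(\cdot\mid a,b)$ and $K_\cdot^{a,b}(\cdot)$ essentially untouched. This flattening makes $K^{D_r}_r(a,b)-K^{D_r}_t(a,b)$ small, so the inequality from Lemma~\ref{lem:lines} now forces $t$ close to $r$, and the enumeration argument (formalized as Lemma~\ref{lem:point}, with an explicit machine that reconstructs $(a,b,x)$ from $p$) goes through. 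The price is that the resulting precision-$r$ bound is $K_r(p)\geq K_r(x\mid a,b)+\eta r-O(\ve r)$, involving the constant $\eta r$ rather than the data-dependent $\min\{K_r(a,b),K_r^{a,b}(x)\}$; one then optimizes over $\eta$ and $\ve$ only after taking liminfs. So the paper proves something strictly weaker than your $(\ast)$ at each scale, and it is the oracle that makes the argument close. Your plan is missing this ingredient, and I do not see how to make $(\ast)$ go through without it; at minimum, the enumeration-plus-filtering sketch as stated would fail for the reason above. One smaller point: your first display should be an inequality rather than an identity, since $K_r(p\mid a,b)\geq K_r^{a,b}(x)-O(\log r)$ by Lemma~\ref{lem:AIPKSCD:4} but equality is not claimed; fortunately this only helps the direction you need.
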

		\noindent Since $\dim(0,b)\leq\min\{\dim(a,b),1\}$, the second statement implies that every line contains two points whose dimensions differ by at least 1, and therefore that the dimension spectrum cannot be a singleton.
		
		This theorem also implies a new result in classical geometric measure theory. It has been known for more than a decade~\cite{Hitc05} that for certain classes of sets,
		\begin{equation}\label{eq:correspondence}
			\sup_{x\in E}\dim(x)=\dim_H(E)\,,
		\end{equation}
		where $\dim_H(E)$ is the (classical) Hausdorff dimension of $E$, i.e., the most standard notion of fractal dimension. Although (\ref{eq:correspondence}) does not hold in general, this correspondence suggested that effective dimensions might provide new techniques for dimension bounds in classical fractal geometry.
		
		A recent \emph{point-to-set principle} of J. Lutz and N. Lutz~\cite{LutLut17} reinforces that prospect by characterizing the Hausdorff dimension of arbitrary sets in terms of effective dimension. This principle shows that for every set $E\subseteq\R^n$ there is an oracle relative to which (\ref{eq:correspondence}) holds. In the same work, that principle is applied to give a new proof of an old result in fractal geometry. Namely, it gives an algorithmic information theoretic proof of Davies's 1971 theorem~\cite{Davi71} stating that every \emph{Kakeya set} in $\R^2$---i.e., every plane set that contains a unit segment in every direction--- has Hausdorff dimension 2.
		
		In this work, we apply the same point-to-set principle to derive a new result in classical fractal geometry from our main theorem. \emph{Furstenberg sets} generalize Kakeya sets in $\R^2$; instead of containing segments in every direction, they contain $\alpha$-(Hausdorff)-dimensional subsets of lines in every direction, for some parameter $\alpha\in(0,1]$. While the theorem of Davies gives the minimum Hausdorff dimension of Kakeya sets in $\R^2$, the minimum Hausdorff dimension of Furstenberg sets is an important open question; the best known lower bound is $\alpha+\max\{1/2,\alpha\}$.\footnote{According to Wolff~\cite{Wolff99}, this result is due, ``in all probability,'' to Furstenberg and Katznelson. See~\cite{Rela14} for a survey.} Molter and Rela~\cite{MolRel12} generalized this notion further by requiring $\alpha$-dimensional subsets of lines in only a $\beta$-dimensional set of directions, for some second parameter $\beta\in(0,1]$. They showed that any such set has Hausdorff dimension at least $\alpha+\max\{\beta/2,\alpha+\beta-1\}$. In Theorem~\ref{thm:furstenberg}, we give a lower bound of $\alpha+\min\{\beta,\alpha\}$, which constitutes an improvement whenever $\alpha,\beta<1$ and $\beta/2<\alpha$.

		For the sake of self-containment, we begin in Section~\ref{sec:prelim} with a short review of Kolmogorov complexity and effective Hausdorff dimension, along with some necessary technical lemmas. We discuss and prove our main theorem in Section~\ref{sec:main}, and we apply it to generalized Furstenberg sets in Section~\ref{sec:app}. We conclude with a brief comment on future directions.
		
	\section{Algorithmic Information Preliminaries}\label{sec:prelim}
	
		\subsection{Kolmogorov Complexity in Discrete Domains}
		
		The \emph{conditional Kolmogorov complexity} of $\sigma\in\{0,1\}^*$ given $\tau\in\{0,1\}^*$ is
		\[K(\sigma|\tau)=\min_{\pi\in\{0,1\}^*}\left\{\ell(\pi):U(\pi,\tau)=\sigma\right\}\,,\]
		where $U$ is a fixed universal prefix-free 
		Turing machine and $\ell(\pi)$ is the length of $\pi$. Any $\pi$ that achieves this minimum is said to \emph{testify} to, or be a \emph{witness} to, the value $K(\sigma|\tau)$. The \emph{Kolmogorov complexity} of $\sigma$ is $K(\sigma)=K(\sigma|\lambda)$, where $\lambda$ is the empty string. $K(\sigma)$ may also be called the \emph{algorithmic information content} of $\sigma$.
		An important property of Kolmogorov complexity is the \emph{symmetry of information} (attributed to Levin in~\cite{Gacs74}):
		\[K(\sigma|\tau,K(\tau))+K(\tau)=K(\tau|\sigma,K(\sigma))+K(\sigma)+O(1)\,.\]
		These definitions and this symmetry extend naturally to other discrete domains (e.g., integers, rationals, and tuples thereof) via standard binary encodings. See \cite{LiVit08,Nies09,DowHir10} for detailed discussion of these topics.

		\subsection{Kolmogorov Complexity in Euclidean Spaces}
		The above definitions may also be lifted to Euclidean spaces by introducing variable precision parameters~\cite{LutMay08,LutLut17}. 
		Let $x\in\R^m$, and let $r,s\in\N$.\footnote{As a matter of notational convenience, if we are given a nonintegral positive real as a precision parameter, we will always round up to the next integer. For example, $K_{r}(x)$ denotes $K_{\lceil r\rceil}(x)$ whenever $r\in(0,\infty)$.} For $\ve>0$, $B_{\ve}(x)$ denotes the open ball of radius $\ve$ centered on $x$.
		
		The \emph{Kolmogorov complexity of $x$ at precision $r$} is
		\[K_r(x)=\min\left\{K(p)\,:\,p\in B_{2^{-r}}(x)\cap\Q^m\right\}\,.\]
		The \emph{conditional Kolmogorov complexity of $x$ at precision $r$ given $q\in\Q^m$} is
		\[\hat{K}_r(x|q)=\min\left\{K(p|q)\,:\,p\in B_{2^{-r}}(x)\cap\Q^m\right\}\,.\]
		The \emph{conditional Kolmogorov complexity of $x$ at precision $r$ given $y\in\R^n$ at precision $s$} is
		\[K_{r,s}(x|y)=\max\big\{\hat{K}_r(x|q)\,:\,q\in B_{2^{-s}}(y)\cap\Q^n\big\}\,.\]
		We abbreviate $K_{r,r}(x|y)$ by $K_r(x|y)$.
		
		Although definitions based on $K$-minimizing rationals are better suited to computable analysis~\cite{Weih00}, it is sometimes useful to work instead with initial segments of infinite binary sequences. It has been informally observed that $K_r(x)=K(x\uhr r)+o(r)$, where $x\uhr r$ denotes the truncation of the binary expansion of each coordinate of $x$ to $r$ bits to the right of the binary point. We make this statement and its conditional analogues precise in Section~\ref{sec:trunc}, and we use those relationships elsewhere in the technical appendix.
		
		The following pair of lemmas show that the above definitions are only linearly sensitive to their precision parameters. Intuitively, making an estimate of a point slightly more precise only requires a small amount of information.
		\begin{lem}\label{lem:MD:3.9}
			\textup{(Case and J. Lutz~\cite{CasLut15})}
			There is a constant $c\in\N$ such that for all $n,r,s\in\N$ and $x\in\R^n$, 
			\[K_r(x)\leq K_{r+s}(x)\leq K_r(x)+K(r)+ns+a_s+c\,,\]
			where $a_s=K(s)+2\log(\lceil\frac12\log n\rceil+s+3)+(\lceil\frac12\log n\rceil+3)n+K(n)+2\log n$.
		\end{lem}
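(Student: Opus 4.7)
The left inequality $K_r(x)\le K_{r+s}(x)$ is immediate from the nesting $B_{2^{-(r+s)}}(x)\subseteq B_{2^{-r}}(x)$: any rational witnessing the right-hand side is already a valid candidate in the minimization defining the left-hand side.

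For the right inequality, my plan is to take a witness $p\in B_{2^{-r}}(x)\cap\Q^n$ achieving $K(p)=K_r(x)$ and, by means of a short additional program, locate a nearby dyadic rational $p'\in B_{2^{-(r+s)}}(x)\cap\Q^n$. Fix $c_n:=\lceil\tfrac{1}{2}\log n\rceil+3$ and work in the lattice of dyadic rationals with common coordinatewise denominator $2^{r+s+c_n}$. The closest lattice point $p'$ to $x$ in this grid is within coordinatewise distance $2^{-(r+s+c_n)-1}$, hence within Euclidean distance $\sqrt n\cdot 2^{-(r+s+c_n)-1}<2^{-(r+s)}$ by the choice of $c_n$, so a valid $p'$ exists. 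Since $|p-x|<2^{-r}$, this $p'$ lies within $L^{\infty}$-distance $2^{-r}+2^{-(r+s)}$ of $p$, and so belongs to a computably enumerable set $S(p)$ of at most $2^{n(s+\lceil\tfrac{1}{2}\log n\rceil+3)}$ lattice points.

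To bound $K(p')$, I would build a universal-machine input that concatenates prefix-free codes for $p$, $r$, $s$, and $n$ (contributing $K_r(x)+K(r)+K(s)+K(n)$ bits), followed by a self-delimiting binary index specifying $p'$ within the enumeration of $S(p)$. The index itself requires at most $ns+(\lceil\tfrac{1}{2}\log n\rceil+3)n$ bits, and an Elias-style encoding of its length adds a further $2\log n+2\log(\lceil\tfrac{1}{2}\log n\rceil+s+3)+O(1)$ bits. A universal decoder, supplied with $n,r,s,p$, reconstructs the enumeration of $S(p)$, reads the index, and outputs $p'$. Collecting these contributions and regrouping them according to the definition of $a_s$ gives
\[K_{r+s}(x)\le K(p')\le K_r(x)+K(r)+ns+a_s+c\]
for a suitable absolute constant $c$.

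The only delicate point is the calibration of $c_n$: it must be large enough that a coordinatewise-dyadic lattice point automatically supplies a Euclidean-close approximation, and small enough that the resulting cardinality of $S(p)$ together with the prefix-free encoding of the index length fits precisely inside the constants of $a_s$. Once $c_n$ is fixed, the enumeration and encoding steps are essentially mechanical.
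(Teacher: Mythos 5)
The paper does not prove this lemma---it is quoted from Case and J.~Lutz~\cite{CasLut15}---so there is no in-paper argument to compare against. On its own merits, your outline follows the right route: the left inequality via nesting of balls is correct, and for the right inequality your plan---fix $c_n := \lceil\tfrac12\log n\rceil+3$ so that the nearest point $p'$ of the lattice $2^{-(r+s+c_n)}\Z^n$ to $x$ lies in $B_{2^{-(r+s)}}(x)$, then code $p'$ by a prefix-free header (the witness for $K_r(x)$ together with prefix-free codes for $r,s,n$) followed by a self-delimiting index into a computably enumerable candidate set $S(p)$---is the standard and sound one, and the Elias-style length overhead you budget matches the $2\log(\lceil\tfrac12\log n\rceil+s+3)+2\log n$ terms in $a_s$.

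There is, however, a genuine gap in your cardinality bound for $S(p)$. You define $S(p)$ as the lattice points within $L^{\infty}$-distance $2^{-r}+2^{-(r+s)}$ of $p$ and claim $|S(p)|\le 2^{n(s+c_n)}$. But per coordinate, the relevant interval has length $2(2^{-r}+2^{-(r+s)})$ while the lattice spacing is $2^{-(r+s+c_n)}$, so the per-coordinate count is about $2^{s+c_n+1}+2^{c_n+1}$, already exceeding $2^{s+c_n+1}$. Hence $|S(p)|$ can be as large as roughly $2^{n(s+c_n+2)}$, and the index needs about $n(s+c_n)+2n$ bits, overshooting the $ns+(\lceil\tfrac12\log n\rceil+3)n$ allotment in $a_s$ by $\approx 2n$---an amount that cannot be absorbed into the $n$-independent constant $c$ nor into the $K(n)+2\log n$ terms of $a_s$. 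The repair is to use the Euclidean ball instead: take $S(p)$ to be the lattice points in $B_{2^{-r}+\sqrt n\,2^{-(r+s+c_n)-1}}(p)$. A disjoint-cube packing argument gives $|S(p)|\le V_n\bigl(2^{s+c_n}+\sqrt n\bigr)^n$ where $V_n$ is the unit-ball volume, and since $\sqrt n\le 2^{c_n-3}$ by your choice of $c_n$, this is at most $V_n(9/8)^n\,2^{n(s+c_n)}$. The prefactor $V_n(9/8)^n$ is bounded by an absolute constant (it tends to $0$ as $n\to\infty$ and is uniformly $\le 11$ or so), so the index length exceeds $n(s+c_n)$ by only $O(1)$ bits, and the rest of your budget then closes the argument as stated.
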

		\begin{lem}\label{lem:AIPKSCD:cond}
			\textup{(J. Lutz and N. Lutz~\cite{LutLut17})}
			For each $m,n\in\N$, there is a constant $c\in\N$ such that, for all $x\in\R^m$, $y\in\R^n$, $q\in\Q^n$, and $r,s,t\in\N$,
			\begin{itemize}
				\item[\textup{(i)}]$\hat{K}_{r}(x|q)\leq\hat{K}_{r+s}(x|q)\leq\hat{K}_{r}(x|q)+ms+2\log(1+s)+K(r,s)+c\,.$
				\item[\textup{(ii)}]$K_{r,t}(x|y)\geq K_{r,t+s}(x|y)\geq K_{r,t}(x|y)-ns-2\log(1+s)+K(t,s)+c\,.$
			\end{itemize}
		\end{lem}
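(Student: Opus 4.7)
Both parts of the lemma follow the same template: the easier direction is immediate from set inclusion, while the other requires a ``grid refinement'' argument showing that sliding between precision scales costs only linearly many bits per dimension.

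\textbf{Part (i).} Since $B_{2^{-(r+s)}}(x)\subseteq B_{2^{-r}}(x)$, the minimum defining $\hat K_r(x|q)$ is taken over a larger set, giving $\hat K_r(x|q)\le \hat K_{r+s}(x|q)$. For the upper bound, let $p\in B_{2^{-r}}(x)\cap\Q^m$ witness $\hat K_r(x|q)$. The canonical grid of spacing $2^{-(r+s)}/(2\sqrt{m})$ intersected with $B_{2^{-r}}(p)$ contains $O(2^{ms})$ rationals, one of which (the one nearest $x$) lies in $B_{2^{-(r+s)}}(x)$. A program that executes the optimal witness for $K(p|q)$, reads self-delimiting descriptions of $r$ and $s$, and uses an $ms$-bit index to select the right grid point outputs a rational in $B_{2^{-(r+s)}}(x)\cap\Q^m$ and has length at most $\hat K_r(x|q)+ms+2\log(1+s)+K(r,s)+O(1)$.

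\textbf{Part (ii).} Again $B_{2^{-(t+s)}}(y)\subseteq B_{2^{-t}}(y)$, so shrinking the domain of the max gives $K_{r,t+s}(x|y)\le K_{r,t}(x|y)$. For the reverse direction, fix $q_1\in B_{2^{-t}}(y)\cap\Q^n$ attaining the max on the left (the max exists because $\hat K_r(x|\cdot)$ is integer-valued and bounded above by $K_r(x)+O(1)$). Because $q_1$ and every point of $B_{2^{-(t+s)}}(y)$ lie within $2\cdot 2^{-t}$ of each other, the canonical grid of spacing $2^{-(t+s)}/(2\sqrt{n})$ restricted to $B_{2\cdot 2^{-t}}(q_1)$ has $O(2^{ns})$ points, and at least one, call it $q_2$, is in $B_{2^{-(t+s)}}(y)\cap\Q^n$. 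Encoding the index of $q_2$ in that list, together with $t$ and $s$, yields $K(q_2\mid q_1)\le ns+2\log(1+s)+K(t,s)+O(1)$.

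Let $p_2\in B_{2^{-r}}(x)\cap\Q^m$ witness $\hat K_r(x|q_2)$. The chain rule for conditional Kolmogorov complexity then gives
\[
\hat K_r(x|q_1)\le K(p_2\mid q_1)\le K(p_2\mid q_2)+K(q_2\mid q_1)+O(1)\le \hat K_r(x|q_2)+ns+2\log(1+s)+K(t,s)+O(1),
\]
and since $\hat K_r(x|q_2)\le K_{r,t+s}(x|y)$, the maximizing choice of $q_1$ delivers the claimed inequality (with the expected sign correction on the $K(t,s)+c$ term in the statement).

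\textbf{The main conceptual obstacle} is in part (ii): $q_1$ may be an arbitrary rational realizing the worst case, yet we must produce a rational inside the strictly smaller ball $B_{2^{-(t+s)}}(y)$ while paying only $O(ns)$ bits of side information relative to $q_1$. The grid-pigeonhole step is precisely what buys this, and once it is in place the chain rule does the rest.
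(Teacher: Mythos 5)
The paper cites this lemma from \cite{LutLut17} without reproducing a proof, so there is no in-text argument to compare against; your proposal has to stand on its own. On its merits, your argument is sound: both ``easy'' directions are indeed immediate from the containment of balls, the grid-pigeonhole step correctly shows that refining the precision by $s$ costs roughly $ms$ (resp.~$ns$) bits plus the cost of encoding $(r,s)$ (resp.~$(t,s)$), and the application of the prefix-free chain rule $K(p_2\mid q_1)\le K(q_2\mid q_1)+K(p_2\mid q_2)+O(1)$ in part (ii) is legitimate. You also correctly flagged that the ``$+K(t,s)+c$'' in statement (ii) must be a sign typo, since taking $s=0$ would otherwise force $K(t,0)+c\le 0$. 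One small technical slip: a grid with spacing $2^{-(r+s)}/(2\sqrt{m})$ does not consist of rational points when $m$ is not a perfect square; replace the spacing by a dyadic value such as $2^{-(r+s)-\lceil\log(2\sqrt{m})\rceil}$ (and similarly with $n$ in part (ii)) so that the candidate points lie in $\Q^m$, as required by the definition of $\hat K_r$. This only changes the grid-point count by a factor absorbed into the machine constant and does not affect the stated bound.
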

		
		In Euclidean spaces, we have a weaker version of symmetry of information, which we will use in the proof of Lemma~\ref{lem:lines}.\footnote{Regarding asymptotic notation, we will treat dimensions of Euclidean spaces (i.e., $m$ and $n$) as constant throughout this work but make other dependencies explicit, either as subscripts or in the text.}
		\begin{lem}\label{lem:unichain}
			For every $m,n\in\N$, $x\in\R^m$, $y\in\R^n$, and $r,s\in\N$ with $r\geq s$,
			\begin{enumerate}
				\item[\textup{(i)}]$\displaystyle |K_r(x|y)+K_r(y)-K_r(x,y)|\leq O(\log r)+O(\log\log \|y\|)\,.$
				\item[\textup{(ii)}]$\displaystyle |K_{r,s}(x|x)+K_s(x)-K_r(x)|\leq O(\log r)+O(\log\log\|x\|)\,.$
			\end{enumerate}
		\end{lem}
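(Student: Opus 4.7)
The idea is to reduce both inequalities to the discrete symmetry of information, applied to optimal rational witnesses of the relevant $K_r$, $\hat K_r$, and $K_{r,s}$ quantities. The convenient form of the discrete symmetry is $K(\sigma,\tau) = K(\sigma) + K(\tau\mid\sigma) + O(\log(K(\sigma) + K(\tau)))$. Since any $q \in B_{2^{-r}}(y) \cap \Q^n$ achieving $K_r(y)$ can be taken to satisfy $K(q) = O(r + \log\|y\|)$, and similarly for the other witnesses, the correction term in the discrete symmetry becomes $O(\log r) + O(\log\log\|y\|)$, matching the error stated in the lemma.

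For part (i), I would handle the two directions separately. To show $K_r(x,y) \leq K_r(x|y) + K_r(y) + O(\log r + \log\log\|y\|)$, let $q^*$ minimize $K_r(y)$ and $p^*$ witness $\hat K_r(x|q^*)$; observe $\hat K_r(x|q^*) \leq K_r(x|y)$ because the latter is a maximum. The pair $(p^*,q^*)$ lies in $B_{2\cdot 2^{-r}}((x,y))$, so by concatenating programs I obtain an approximation of $(x,y)$ of cost $K(q^*) + K(p^*\mid q^*) + O(\log)$, and Lemma~\ref{lem:MD:3.9} absorbs the $O(1)$ precision shift. For the reverse direction, pick $(p,q)$ optimal for $K_r(x,y)$, apply discrete symmetry to get $K(p,q) \geq K(q) + K(p\mid q) - O(\log)$, and note that $K(q) \geq K_r(y)$ because $q$ is a valid $r$-approximation of $y$. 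Part (ii) follows the same template, pairing a minimizer for $K_s(x)$ with a witness for $\hat K_r(x|q) \leq K_{r,s}(x|x)$ in the upper direction, and for the lower direction taking an optimal $p$ for $K_r(x)$, rounding it to precision $s$ to obtain $q$, and applying discrete symmetry to $(p,q)$.

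The main obstacle in both parts is completing the lower-bound direction by relating $K(p\mid q)$ back to $K_r(x|y)$ (respectively $K_{r,s}(x|x)$). Because these conditional quantities are defined as a \emph{maximum} over all admissible rational approximations $q'$, while I directly have only $\hat K_r(x|q) \leq K(p\mid q)$, the step requires bounding the variation $|\hat K_r(x|q) - \hat K_r(x|q')|$ by $O(\log r) + O(\log\log\|y\|)$ for any admissible $q,q'$. This reduces to establishing $K(q'\mid q) = O(\log r + \log\log\|y\|)$ uniformly, which I would prove by first restricting attention to a canonical family of dyadic $r$-truncations (so that any two admissible $q,q'$ agree up to an $O(1)$-bounded integer shift in their numerators at denominator $2^r$) and then transferring back to the $K_r$ and $\hat K_r$ minimizers via a Lemma~\ref{lem:MD:3.9}-style precision-conversion argument, whose overhead is absorbed into the error term.
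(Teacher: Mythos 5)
Your overall plan --- convert to the discrete symmetry of information and absorb a logarithmic correction --- matches the paper's approach, and the upper-bound direction of both parts is fine as you sketched it. The gap is in the lower-bound direction, and specifically in the claim you would need to close it: that $K(q'\mid q) = O(\log r + \log\log\|y\|)$ uniformly over all admissible $q, q' \in B_{2^{-r}}(y)\cap\Q^n$. This is false. A rational $q'$ in the ball is unconstrained below precision $r$, so it can encode an arbitrary amount of extraneous information; for instance $q' = y\!\uhr\! r + 2^{-r-2}\theta$ where $\theta$ encodes $x\!\uhr\! r$ is admissible and makes $\hat K_r(x\mid q') = O(\log r)$. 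Symmetrically, the $q$ you extract as the second coordinate of an optimal witness for $K_r(x,y)$ can be exactly such an ``informative'' rational (the universal machine can always trade description length in $p$ for description length in $q$ at no net cost). In that case $K(p\mid q)$ can be $O(\log r)$ while $K_r(x\mid y)$, which is a \emph{maximum} over admissible conditions, is close to $r$, so you cannot conclude $K(p\mid q)\geq K_r(x\mid y) - O(\log r)$. Your suggestion to ``restrict to dyadic $r$-truncations'' controls $K(q'\mid q)$ for dyadic $q,q'$, but does not by itself justify restricting the maximum defining $K_r(x\mid y)$ (or $K_{r,s}(x\mid x)$) to dyadics, which is where the real work is.

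The paper closes this gap differently, and the fix is worth internalizing: rather than applying discrete symmetry to an arbitrary optimal rational pair, one first proves (Lemma~\ref{lem:trunc}, Corollaries~\ref{cor:trunc} and~\ref{cor:trunccond} in the appendix) that $K_r(x)$, $\hat K_r(x\mid q)$, and $K_{r,s}(x\mid y)$ agree with the initial-segment complexities $K(x\!\uhr\! r)$, $K(x\!\uhr\! r\mid q)$, and $K(x\!\uhr\! r\mid y\!\uhr\! s)$ up to $O(K(r)+K(s))$ error. The nontrivial direction of those lemmas is exactly the observation you are missing: given \emph{any} admissible $q'\in B_{2^{-s}}(y)$, one can compute $y\!\uhr\! s$ from $q'$ plus $O(\log s)$ bits (to specify $s$ and which of $O(1)$ nearby dyadics to pick), so the maximum over $q'$ is attained, up to $O(\log)$, at the canonical truncation $y\!\uhr\! s$; the extra information a bad $q'$ might carry can only decrease $\hat K_r(x\mid q')$, never increase it. Once everything is converted to truncations, the discrete symmetry of information applies directly to $K(x\!\uhr\! r, y\!\uhr\! r)$, and the $O(\log\log\|y\|)$ term arises from $K(K(y\!\uhr\! r))$. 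So the route is: truncation lemmas first, then symmetry on initial segments --- not symmetry on optimal rationals first, then an attempt to normalize the conditioning rational afterwards.
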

		Statement (i) is a minor refinement of Theorem 3 of~\cite{LutLut17}, which treats $x$ and $y$ as constant and states that $K_r(x,y)=K_r(x|y)+K_r(y)+o(r)$. In fact, a precise sublinear term is implicit in earlier work by tracing back through several proofs in~\cite{LutLut17} and~\cite{CasLut15}. Our approach here is more direct and is left to Section~\ref{ssec:sym} of the technical appendix.
		\subsection{Effective Hausdorff Dimension}
		If $K_r(x)$ is the algorithmic information content of $x\in\R^n$ at precision $r$, then we may call $K_r(x)/r$ the \emph{algorithmic information density} of $x$ at precision $r$. This quantity need not converge as $r\to\infty$, but it does have finite asymptotes between $0$ and $n$, inclusive~\cite{LutMay08}. Although effective Hausdorff dimension was initially developed by J. Lutz using generalized martingales~\cite{Lutz03b}, it was later shown by Mayordomo~\cite{Mayo02} that it may be equivalently defined as the lower asymptote of the density of algorithmic information. 
		That is the characterization we use here. For more details on the history of connections between Hausdorff dimension and Kolmogorov complexity, see~\cite{DowHir10,Mayo08}.
		
		The \emph{(effective Hausdorff) dimension} of $x\in\R^n$ is
		\[\dim(x)=\liminf_{r\to\infty}\frac{K_r(x)}{r}\,.\]
		This formulation has led to the development of other information theoretic apparatus for effective dimensions, namely mutual and conditional dimensions~\cite{CasLut15,LutLut17}. We use the latter in this work, including in the restatement of our main theorem in Section~\ref{sec:main}. The \emph{conditional dimension of $x\in\R^m$ given $y\in\R^n$} is
		\[\dim(x|y)=\liminf_{r\to\infty}\frac{K_{r}(x|y)}{r}\,.\]
		\subsection{Algorithmic Information Relative to an Oracle}
		
		The above algorithmic information quantities may be defined relative to any oracle set $A\subseteq\N$. The \emph{conditional Kolmogorov complexity relative to} $A$ of $\sigma\in\{0,1\}^*$ \emph{given} $\tau\in\{0,1\}^*$ is
		\[K^A(\sigma|\tau)=\min_{\pi\in\{0,1\}^*}\{|\pi|\,:\,U^A(\pi,\tau)=\sigma\}\,,\]
		where $U$ is now a universal prefix-free oracle machine and the computation $U^A(\pi,\tau)$ is performed with oracle access to $A$. This change to the underlying Turing machine also induces a relativized version of each other algorithmic information theoretic quantity we have defined.
				
		Multiple oracle sets may be combined by simply interleaving them: given $A_1,\ldots,A_k\subseteq\N$, let $A=\bigcup_{i}\{kj-i+1:j\in A_i\}$. Then $K^{A_1,\ldots,A_k}(x)$ denotes $K^A(x)$.
		We will also consider algorithmic information relative to points in Euclidean spaces.
		For $y\in\R^n$, let $A_y\subseteq\N$ encode the interleaved binary expansions of $y$'s coordinates in some standard way. Then $K^y_r(x)$ denotes $K^{A_y}_r(x)$.
		We will make repeated use of the following relationship between conditional and relative Kolmogorov complexity and dimension.
		
		\begin{lem}\label{lem:AIPKSCD:4}
			\textup{(J. Lutz and N. Lutz~\cite{LutLut17})}
			For each $m,n\in\N$, there is a constant $c\in\N$ such that, for all $x\in\R^m$, $y\in\R^n$, and $r,t\in\N$,
			\[K_r^y(x)\leq K_{r,t}(x|y)+K(t)+c\,.\]
			In particular, $\dim^y(x)\leq\dim(x|y)$.
		\end{lem}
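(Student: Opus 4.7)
The plan is to simulate, under oracle $A_y$, a program witnessing $\hat{K}_r(x|q)$ for some rational $q$ that the oracle lets us produce. The key observation is that $A_y$ encodes the binary expansion of $y$, so from $A_y$ together with the precision parameter $t$ we can uniformly compute a canonical rational $q(t)\in B_{2^{-t}}(y)\cap\Q^n$, for instance by truncating each coordinate of $y$ to $t$ bits.

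Fix such a computable map $(A_y,t)\mapsto q(t)$. By the definition of $K_{r,t}(x|y)$ as a maximum of $\hat{K}_r(x|q)$ over $q\in B_{2^{-t}}(y)\cap\Q^n$,
\[\hat{K}_r(x|q(t))\leq K_{r,t}(x|y)\,.\]
Let $\pi$ witness $\hat{K}_r(x|q(t))$, i.e., $U(\pi,q(t))=p$ for some $p\in B_{2^{-r}}(x)\cap\Q^m$ with $\ell(\pi)=\hat{K}_r(x|q(t))$, and let $\pi_t$ be a shortest prefix-free program for $t$, so that $\ell(\pi_t)=K(t)$. The concatenation $\pi'=\pi_t\pi$ is then an input to a fixed oracle machine $M$ that, with oracle $A_y$, parses $\pi_t$ to recover $t$, queries $A_y$ to assemble $q(t)$, and simulates $U$ on $(\pi,q(t))$ to output $p$. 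By the universality of the relativized $U$, absorbing the constant overhead of $M$ into $c$,
\[K^y_r(x)\leq K^{A_y}(p)\leq \ell(\pi')+O(1)\leq K_{r,t}(x|y)+K(t)+c\,.\]

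For the \emph{In particular} clause, I would specialize to $t=r$ to obtain $K^y_r(x)\leq K_r(x|y)+K(r)+c$; dividing by $r$ and using the standard bound $K(r)=O(\log r)=o(r)$, the $\liminf$ as $r\to\infty$ yields $\dim^y(x)\leq\dim(x|y)$. The only subtlety --- and not really an obstacle --- is the direction of the quantifier: $K_{r,t}(x|y)$ is a \emph{maximum} over rationals near $y$, while the oracle machine outputs a single canonical $q(t)$. This goes in the right direction, since the maximum dominates any particular choice, so no deeper argument is needed beyond careful bookkeeping of the oracle simulation.
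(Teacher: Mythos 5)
This lemma is cited directly from J.\ Lutz and N.\ Lutz~\cite{LutLut17} and the paper supplies no proof of its own, so there is no in-paper argument to compare against. Your approach is the natural one and is essentially correct: from $A_y$ the machine extracts a canonical rational approximation of $y$, then simulates a witness for the conditional complexity at that rational, with $K(t)$ bits spent to self-delimitingly communicate the precision parameter. One imprecision to fix: truncating each coordinate of $y$ to $t$ bits produces a point $q(t)$ within $\sqrt{n}\cdot 2^{-t}$ of $y$, which for $n\geq 2$ lies outside $B_{2^{-t}}(y)$, so the dominance $\hat{K}_r(x|q(t))\leq K_{r,t}(x|y)$ would not follow from the definition. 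Truncating instead to $t+\lceil\tfrac12\log n\rceil+1$ bits places $q(t)\in B_{2^{-t}}(y)\cap\Q^n$; the extra bit count depends only on $n$ and the machine can compute it from $t$, so it is absorbed into $c$. With that adjustment the chain $K^y_r(x)\leq K^{A_y}(p)\leq K(t)+\hat{K}_r(x|q(t))+O(1)\leq K(t)+K_{r,t}(x|y)+O(1)$ is sound, as is the $t=r$ specialization and $\liminf$ argument for the dimension inequality.
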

				
		In pursuing a dimensional lower bound, we will use the fact that high-dimensional points are very common. Relative to any oracle $A\subseteq\N$, it follows from standard counting arguments that almost every point $x\in\R^n$ has $\dim^A(x)=n$ and is furthermore \emph{Martin-L\"of random} relative to $A$, meaning there is some constant $c$ such that, for all $r\in\N$, $K^A_r(x)\geq nr-c$~\cite{DowHir10}.
		
		Finally, we note that all results in this paper hold, with unmodified proofs, relative to any given oracle. We present the unrelativized versions only to avoid notational clutter.

	\section{Bounding the Dimension of $(x,ax+b)$}\label{sec:main}
		In this section we prove Theorem~\ref{thm:main}, our main theorem. We first restate it in the form we will prove, which is slightly stronger than its statement in Section~\ref{sec:intro}. The dimension of $x$ in the first term is conditioned on---instead of relative to---$(a,b)$, and even when working relative to an arbitrary oracle $A$, the last term $\dim^{a,b}(x)$ remains unchanged.
		{\renewcommand{\thethm}{\ref{thm:main}}
			\begin{thm}\textup{(Restated)}
				For every $a,b,x\in\R$ and $A\subseteq\N$,
				\begin{align*}
					\dim^A(x,ax+b)\geq \dim^A(x|a,b)+
					\min\big\{\dim^A(a,b),\,\dim^{a,b}(x)\big\}\,.
				\end{align*}
				In particular, for almost every $x\in\R$,
				$\dim(x,ax+b)=1+\min\{\dim(a,b),1\}$.
			\end{thm}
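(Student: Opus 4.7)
The natural plan is to combine the symmetry of information (Lemma~\ref{lem:unichain}(i)) with the fact that $ax+b$ is computable from $(x,a,b)$ to precision $r$ with only $O(\log r)$ overhead, which gives $K^A_r(x, ax+b \mid a, b) = K^A_r(x \mid a, b) + O(\log r)$. Combining these yields the chain identity
\[ K^A_r(x, ax+b) = K^A_r(x \mid a, b) + K^A_r(a, b) - K^A_r(a, b \mid x, ax+b) + O(\log r). \]
Dividing by $r$ and taking liminf, the theorem reduces to showing that, along a subsequence of $r$ witnessing the liminf of $K^A_r(x, ax+b)/r$,
\[ K^A_r(a, b \mid x, ax+b) \le K^A_r(a, b) - \min\{\dim^A(a, b),\, \dim^{a,b}(x)\}\cdot r + o(r). \]

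The heart of the argument is then a geometric-complexity lemma upper-bounding $K^A_r(a, b \mid x, ax+b)$. Given $(x, ax+b)$ at precision $r$, every compatible pair $(a', b')$ lies in a strip of width $O(2^{-r}/\sqrt{1+x^2})$ in $(a,b)$-space (along the line $b' = (ax+b) - a' x$). This ``strip bound,'' combined with a $2^{-t}$-precision approximation of $(a,b)$, gives
\[ K^A_r(a, b \mid x, ax+b) \le K^A_t(a, b) + (r - t) + O(\log r) \]
for every $t \le r$, since the intersection of the strip with a $2^{-t}$-box is covered by $O(2^{r-t})$ balls of radius $2^{-r}$. Optimising in $t$ alone yields only $\dim(x, ax+b) \ge \dim(x \mid a, b) + \max\{0, \dim(a, b) - 1\}$, which matches the theorem only when $\dim^{a,b}(x)=1$. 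To obtain the full $\min\{\dim^A(a, b),\dim^{a, b}(x)\}$ savings, the proof must further exploit the randomness of $x$ relative to $(a, b)$: if too many pairs $(a', b')$ with $K^A(a', b') \le K^A_r(a,b)$ all lay in the strip through $(x, ax+b)$, then $x$ would be forced into a small effectively-presented subset of $\R$ determined from $(a, b)$, contradicting $K^{a, b}_r(x) \ge \dim^{a, b}(x)\cdot r - o(r)$. Turning this heuristic into a clean counting bound on the number of low-complexity candidate pairs compatible with our specific $x$---and carefully aligning the liminf-witnessing subsequences for $K^A_r(a,b)$, $K^{a,b}_r(x)$, and $K^A_r(x, ax+b)$---is the principal technical obstacle I foresee, and is where the choice of an intermediate precision $t$ along a subsequence with $K^A_t(a,b)\approx \dim^A(a,b)\cdot t$ must be made.

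The ``in particular'' statement is then an easy consequence of the main inequality together with a matching upper bound. Almost every $x \in \R$ is Martin-L\"of random relative to $(a, b)$, which forces $\dim^{a, b}(x) = \dim(x \mid a, b) = 1$ (the latter via Lemma~\ref{lem:AIPKSCD:4}) and $K^{a, b}_r(x) = r + O(1)$. Plugging into the main inequality gives $\dim(x, ax+b) \ge 1 + \min\{\dim(a, b), 1\}$. For the matching upper bound, $\dim(x, ax+b) \le 2$ is immediate, and the computability of $ax+b$ from $(x, a, b)$ yields $K_r(x, ax+b) \le K^{a, b}_r(x) + K_r(a, b) + O(\log r) \le r + K_r(a, b) + O(\log r)$, so $\dim(x, ax+b) \le 1 + \dim(a, b)$. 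Taking the minimum of these two upper bounds delivers the equality $\dim(x, ax+b) = 1 + \min\{\dim(a, b), 1\}$.
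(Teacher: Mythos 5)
Your reduction via symmetry of information to bounding $K^A_r(a,b \mid x, ax+b)$ from above is sound, and you correctly observe that the naive strip-covering bound yields only the weaker $\max\{0, \dim(a,b)-1\}$. But the step you leave open is the crux of the theorem, and the counting heuristic you sketch is never turned into an argument---you acknowledge as much yourself. The paper closes this gap with a different and concrete mechanism: a pointwise complexity lower bound for concurrent lines (Lemma~\ref{lem:lines}) stating that if $ux+v=ax+b$ and $\|(u,v)-(a,b)\|=2^{-t}$ with $t\le r$, then $K_r(u,v)\ge K_t(a,b)+K_{r-t,r}(x\mid a,b)-O(\log r)$. This is exactly where the $\dim^{a,b}(x)$ term enters: two lines at separation $2^{-t}$ meeting at $(x,ax+b)$ determine $x$ to precision $\approx r-t$, so high conditional complexity of $x$ given $(a,b)$ forces any far-away concurrent line to itself be information-rich. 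Given this inequality, a machine that enumerates short programs until one outputs a line through the estimated point must recover something close to $(a,b)$, which is the decoding bound you need (Lemma~\ref{lem:point}). Your ``too many low-complexity lines in the strip'' idea is pointing at the same phenomenon, but it has not been made into a proof.

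Your worry about aligning $\liminf$-witnessing subsequences for $K^A_r(a,b)$, $K^{a,b}_r(x)$, and $K^A_r(x,ax+b)$ is also legitimate, and the paper dissolves it with a second device absent from your proposal: an oracle construction (Lemma~\ref{lem:oracles}) producing, at each precision $r$, an oracle $D_r$ that caps $K_t(a,b)$ at $\approx\min\{\eta r, K_t(a,b)\}$ for all $t\le r$ while leaving $K_{t,r}(\cdot\mid a,b)$ and $K^{a,b}_t(\cdot)$ essentially unchanged. Choosing a rational $\eta<\min\{\dim^A(a,b),\dim^{a,b}(x)\}$ and relativizing to $D_r$ makes the hypotheses of Lemma~\ref{lem:point} hold at every sufficiently large $r$, so one divides by $r$, takes $\liminf$, and lets $\eta$ approach the minimum---no subsequence alignment is ever needed. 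Without both of these ingredients your plan stalls precisely where you flagged it would. Your handling of the ``in particular'' clause, by contrast, is fine and in fact more explicit about the matching upper bound than the paper's one-line sketch.
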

			\addtocounter{thm}{-1}}
		To prove this theorem, we proceed in three major steps, which we first sketch at a very high level here. In Section~\ref{ssec:point}, we give sufficient conditions, at a given precision $r$, for a point $(x,ax+b)$ to have information content $K_r(x,ax+b)$ approaching $K_r(a,b,x)$. Notice that this is essentially the maximum possible value for $K_r(x,ax+b)$, since an estimate for $(a,b,x)$ has enough information to estimate $(x,ax+b)$ to similar precision. Informally, the conditions are
		\begin{itemize}
			\item[(i)] $K_r(a,b)$ is small.
			\item[(ii)] If $ux+v=ax+b$, then either $K_r(u,v)$ is large or $(u,v)$ is close to $(a,b)$.
		\end{itemize}

		We show in Lemma~\ref{lem:point} that when these conditions hold, we can algorithmically estimate $(a,b,x)$ given an estimate for $(x,ax+b)$.
		In Section~\ref{ssec:lines}, we give a lower bound, Lemma~\ref{lem:lines}, on $K_r(u,v)$ in terms of $\|(u,v)-(a,b)\|$, essentially showing that condition (ii) holds.
		Finally, we prove Theorem~\ref{thm:main} in Section~\ref{ssec:main} by showing that there is an oracle which allows $(a,b)$ to satisfy condition (i) without disrupting condition (ii) or too severely lowering $K_r(x,a,b)$.
		\subsection{Sufficient Conditions for a High-complexity Point}\label{ssec:point}
			Suppose that $x$, $a$, and $b$ satisfy conditions (i) and (ii) above. Then, given an estimate $q$ for the point $(x,ax+b)$, a machine can estimate $(a,b)$ by simply running all short programs until some output approximates a pair $(u,v)$ such that the line $L_{u,v}=\{(x,ux+v):x\in\R\}$ passes near $q$. Since $(u,v)$ was approximated by a short program, it has low information density and is therefore close to $(a,b)$ by condition (ii). We formalize this intuition in the following lemma.
			\begin{lem}\label{lem:point}
				Suppose that $a,b,x\in\R$, $r\in\N$, $\delta\in\R_+$, and $\ve,\eta\in\Q_+$ satisfy $r\geq \log(2|a|+|x|+5)+1$ and the following conditions.
				\begin{itemize}
					\item[\textup{(i)}]$K_r(a,b)\leq \left(\eta+\ve\right)r$.
					\item[\textup{(ii)}] For every $(u,v)\in B_1(a,b)$ such that $ux+v=ax+b$, \[K_{r}(u,v)\geq\left(\eta-\ve\right)r+\delta\cdot(r- t)\,,\]
					whenever $t=-\log\|(a,b)-(u,v)\|\in(0,r]$.
				\end{itemize}
				Then for every oracle set $A\subseteq\N$,
				\[K_r^A(x,ax+b)\geq K_r^A(a,b,x)-\frac{4\ve}{\delta}r-K(\ve)-K(\eta)-O_{a,b,x}(\log r)\,.\]
			\end{lem}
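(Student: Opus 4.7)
The plan is to construct a single program that, given a shortest description of $(x,ax+b)$, outputs a precision-$r$ approximation of $(a,b,x)$. Fix an oracle $A$ and let $\pi$ be a witness to $K_r^A(x,ax+b)$, so $U^A(\pi)=q=(q_1,q_2)$ is a rational within $2^{-r}$ of $(x,ax+b)$; in particular $q_1$ is automatically a precision-$r$ approximation of $x$. The remaining task is to recover $(a,b)$ to precision $r$ from $q$ together with a modest amount of extra information.

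I would design an oracle machine $M$ whose input consists of $\pi$, self-delimiting codes for $\varepsilon$ and $\eta$, a constant amount of advice identifying the ball $B_1(a,b)$, and an additional $4\varepsilon r/\delta + O(\log r)$ bits that will serve as a precision-boost in the sense of Lemma~\ref{lem:MD:3.9}. On this input $M$ first computes $q$, then dovetails through all programs $\sigma$ with $|\sigma|\leq (\eta+\varepsilon)r+O(1)$, retaining the first $(u,v)\in B_1(a,b)\cap\mathbb{Q}^2$ whose line $L_{u,v}$ has vertical discrepancy $|uq_1+v-q_2|$ below a threshold $\tau=(|a|+|x|+3)\cdot 2^{-r}$. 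Condition~(i), together with the precision hypothesis $r\geq\log(2|a|+|x|+5)+1$, guarantees that the witness to $K_r(a,b)$ clears this threshold, so the search halts.

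The critical step is to verify that any accepted pair $(u,v)$ lies within $O(2^{-r+2\varepsilon r/\delta})$ of $(a,b)$. For this I would introduce the auxiliary pair $(u,v'')$ with $v'':=(a-u)x+b$, so that $(u,v'')$ lies on the exact line through $(x,ax+b)$, remains in $B_1(a,b)$, and is therefore covered by condition~(ii). The choice of $\tau$ and the precision hypothesis together force $|v-v''|=O(2^{-r})$, so $(u,v)$ is a rational precision-$r$ approximation of $(u,v'')$ and hence $K_r(u,v'')\leq(\eta+\varepsilon)r+O(\log r)$. Substituting into condition~(ii) yields $\delta(r-t'')\leq 2\varepsilon r+O(\log r)$, i.e.\ $\|(a,b)-(u,v'')\|\leq 2^{-r+2\varepsilon r/\delta + O((\log r)/\delta)}$; the triangle inequality transfers the same bound, up to constants, to $(u,v)$ itself.

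The accepted $(u,v)$ is then a precision-$(r-2\varepsilon r/\delta)$ approximation of $(a,b)$; the additional $4\varepsilon r/\delta + O(\log r)$ bits in $M$'s input refine it to precision $r$ via Lemma~\ref{lem:MD:3.9} applied in dimension two, and pairing the result with $q_1$ produces a precision-$r$ approximation of $(a,b,x)$. The total length of the decoder's input is $|\pi| + K(\varepsilon) + K(\eta) + O_{a,b,x}(\log r) + 4\varepsilon r/\delta$, which rearranges to the stated bound. The main obstacle is calibrating $\tau$: it must be small enough to force every accepted candidate close to $(a,b)$, yet large enough for the witness from condition~(i) to actually be accepted, and both requirements reduce to controlling the discrepancy between the vertical distance $|uq_1+v-q_2|$ and the Euclidean distance $\|(u,v)-(a,b)\|$ under the precision hypothesis on $r$.
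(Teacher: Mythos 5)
Your proposal is correct and follows essentially the same route as the paper: an oracle machine dovetails through programs of length at most $(\eta+\ve)r$ searching for a pair $(u,v)$ whose line passes near the rational $q$, condition~(i) guarantees termination, and condition~(ii) applied to the nearby on-line pair (your $(u,v'')$ is exactly the pair the paper obtains via Observation~\ref{obs:linemachine}(ii)) bounds the accepted pair's distance from $(a,b)$; your ``precision-boost'' bits of length $\sim 4\ve r/\delta$ play precisely the role of the paper's intermediate step bounding $K^A_{r-\gamma-1}(u,v,x)$ and then passing to $K^A_r(a,b,x)$ at a cost of $2(r-t)\leq 4\ve r/\delta+O(\log r)$. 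One note on the threshold issue you flagged: Observation~\ref{obs:linemachine}(i) only guarantees the condition-(i) witness satisfies $|\hat p_1 q_1+\hat p_2-q_2|<2^{-r}(|\hat p_1|+|q_1|+3)\leq 2^{-r}(|a|+|x|+5)$, so your fixed $\tau=(|a|+|x|+3)2^{-r}$ is slightly too small to be sure of accepting it; the paper sidesteps the whole calibration by having the machine test each candidate against the observed quantity $2^{-s}(|p_1|+|q_1|+3)$ rather than a single threshold written in terms of the (to the machine, unknown) $|a|,|x|$.
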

			\begin{proof}
				Let $a$, $b$, $x$, $r$, $\delta$, $\ve$, $\eta$, and $A$ be as described in the lemma statement.
				
				Define an oracle Turing machine $M$ that does the following given oracle $A$ and input $\pi=\pi_1\pi_2\pi_3\pi_4\pi_5$ such that $U^A(\pi_1)=(q_1,q_2)\in\Q^2$, $U(\pi_2)=h\in\Q^2$, $U(\pi_3)=s\in\N$, $U(\pi_4)=\zeta\in\Q$, and $U(\pi_5)=\iota\in\Q$.
				
				For every program $\sigma\in\{0,1\}^*$ with $\ell(\sigma)\leq (\iota+\zeta)s$, in parallel, $M$ simulates $U(\sigma)$. If one of the simulations halts with some output $(p_1,p_2)\in \Q^2\cap B_{2^{-1}}(h)$ such that $|p_1q_1+p_2-q_2|< 2^{-s}(|p_1|+|q_1|+3)$, then $M$ halts with output $(p_1,p_2,q_1)$. Let $c_M$ be a constant for the description of $M$.
				
				Now let $\pi_1$, $\pi_2$, $\pi_3$, $\pi_4$, and $\pi_5$ testify to $K^A_r(x,ax+b)$, $K_1(a,b)$, $K(r)$, $K(\ve)$, and $K(\eta)$, respectively, and let $\pi=\pi_1\pi_2\pi_3\pi_4\pi_5$.
				
				By condition (i), there is some $(\hat{p}_1,\hat{p}_2)\in B_{2^{-r}}(a,b)$ such that $K(\hat{p}_1,\hat{p}_2)\leq (\eta+\ve)r$, meaning that there is some $\hat{\sigma}\in\{0,1\}^*$ with $\ell(\hat{\sigma})\leq(\eta+\ve)r$ and $U(\hat{\sigma})=(\hat{p}_1,\hat{p}_2)$. A routine calculation (Observation~\ref{obs:linemachine}(i)) shows that  \[|\hat{p}_1q_1+\hat{p}_2-q_2|< 2^{-r}(|\hat{p}_1|+|q_1|+3)\,,\]
				for every $(q_1,q_2)\in B_{2^{-r}}(x,ax+b)$, so $M$ is guaranteed to halt on input $\pi$.
				Hence, let $(p_1,p_2,q_1)=M(\pi)$.
				Another routine calculation (Observation \ref{obs:linemachine}(ii)) shows that there is some
				\[(u,v)\in B_{2^{\gamma-r}}(p_1,p_2)\subseteq B_{2^{-1}}(p_1,p_2)\subseteq B_{2^0}(a,b)\]
				such that $ux+v=ax+b$, where $\gamma=\log(2|a|+|x|+5)$.
				
				We have $\|(p_1,p_2)-(u,v)\|<2^{\gamma-r}$ and $|q_1-x|<2^{-r}$, so
				\[(p_1,p_2,q_1)\in B_{2^{\gamma+1-r}}(u,v,x)\,.\]
				It follows that
				\begin{align*}
				K^A_{r-\gamma-1}(u,v,x)&\leq \ell(\pi_1\pi_2\pi_3\pi_4\pi_5)+c_M\\
				&\leq K^A_r(x,ax+b)+K_1(a,b)+K(r)+K(\ve)+K(\eta)+c_M\\
				&=K^A_r(x,ax+b)+K(\ve)+K(\eta)+O_{a,b}(\log r)\,.
				\end{align*}
				Rearranging and applying Lemma~\ref{lem:MD:3.9},
				\begin{equation}\label{eq:uvx}
					K^A_r(x,ax+b)\geq K^A_r(u,v,x)-K(\ve)-K(\eta)-O_{a,b,x}(\log r)\,.
				\end{equation}
				By the definition of $t$, if $t>r$ then $B_{2^{-r}}(u,v,x)\subseteq B_{2^{1-r}}(a,b,x)$, which implies $K^A_r(u,v,x)\geq K^A_{r-1}(a,b,x)$. Applying Lemma~\ref{lem:MD:3.9} gives
				\[K^A_r(u,v,x)\geq K^A_r(a,b,x)-O_{a,x}(\log r)\,.\]
				Otherwise, when $t\leq r$, we have $B_{2^{-r}}(u,v,x)\subseteq B_{2^{1-t}}(a,b,x)$, which implies $K^A_r(u,v,x)\geq K_{t-1}(a,b,x)$, so by Lemma~\ref{lem:MD:3.9},
				\begin{equation}\label{eq:abx}
				K^A_r(u,v,x)\geq K^A_r(a,b,x)-2(r-t)-O_{a,x}(\log r)\,.
				\end{equation}
				
				We now bound $r-t$. By our construction of $M$ and Lemma~\ref{lem:MD:3.9},
				\begin{align*}
				(\eta+\ve)r&\geq K(p_1,p_2)\\
				&\geq K_{r-\gamma}(u,v)\\
				&\geq K_r(u,v)-O_{a,x}(\log r)\,.
				\end{align*}
				Combining this with condition (ii) in the lemma statement and simplifying yields
				\[r-t\leq \frac{2\ve}{\delta}r+O_{a,x}(\log r)\,,\]
				which, together with (\ref{eq:uvx}) and (\ref{eq:abx}), gives the desired result.
			\end{proof}
		\subsection{Bounding the Complexity of Lines through a Point}\label{ssec:lines}
		In this section we bound the information content of any pair $(u,v)$ such that the line $L_{u,v}$ intersects $L_{a,b}$ at $x$. Intuitively, an estimate for $(u,v)$ gives significant information about $(a,b)$ whenever $L_{u,v}$ and $L_{a,b}$ are nearly coincident. On the other hand, estimates for $(a,b)$ and $(u,v)$ passing through $x$ together give an estimate of $x$ whose precision is
		greatest when $L_{a,b}$ and $L_{u,v}$ are nearly orthogonal. We make this dependence on $\|(a,b)-(u,v)\|$ precise in the following lemma.
		\begin{lem}\label{lem:lines}
			Let $a,b,x\in\R$. For all $u,v\in B_1(a,b)$ such that $u x+v=ax+b$, and for all $r\geq t:=-\log\|(a,b)-(u,v)\|$,
			\[K_{r}(u,v)\geq K_t(a,b) + K_{r-t,r}(x|a,b)-O_{a,b,x}(\log r)\,.\]
		\end{lem}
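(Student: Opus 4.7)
The plan is to lower-bound $K_r(u,v)$ by decomposing it via symmetry of information and controlling the two resulting conditional complexities using the geometric constraints that $(u,v)$ is $2^{-t}$-close to $(a,b)$ and that both lines pass through the common point $(x,ax+b)$. Two applications of Lemma~\ref{lem:unichain}(i) to the pair $((u,v),(a,b))$ give
\[
K_r(u,v)\;=\;K_r(a,b) + K_r(u,v\,|\,a,b) - K_r(a,b\,|\,u,v) + O_{a,b,x}(\log r);
\]
the strategy is then to prove $K_r(u,v\,|\,a,b)\ge K_{r-t,r}(x\,|\,a,b)-O_{a,b,x}(\log r)$ and $K_r(a,b\,|\,u,v)\le K_r(a,b)-K_t(a,b)+O_{a,b,x}(\log r)$, after which the $K_r(a,b)$ terms cancel and the claim falls out.

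For the lower bound on $K_r(u,v\,|\,a,b)$, the key geometric input is that $ux+v=ax+b$ together with $\|(u,v)-(a,b)\|=2^{-t}$ forces $(u-a)x+(v-b)=0$ and hence $|a-u|=2^{-t}/\sqrt{1+x^2}$. Given any rational $2^{-r}$-approximation $q=(q_1,q_2)$ of $(a,b)$ and any $2^{-r}$-approximation $(u',v')$ of $(u,v)$, I form $x':=(v'-q_2)/(q_1-u')$; a short sensitivity calculation (the partial derivatives of $(v-b)/(a-u)$ have magnitude $O_x(1/|a-u|)=O_x(2^t)$) shows $|x'-x|=O_x(2^{t-r})$. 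Converting this reduction into a complexity bound via the max-over-$q$ definition of $K_{r,r}(\cdot\,|\,a,b)$, and absorbing the $O_x(1)$ precision loss via Lemma~\ref{lem:AIPKSCD:cond}, yields the desired inequality.

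For the upper bound on $K_r(a,b\,|\,u,v)$, note that every $q'\in B_{2^{-r}}(u,v)\cap\Q^2$ lies within $2^{-r}+2^{-t}\le 2^{-(t-1)}$ of $(a,b)$, so by monotonicity of the max in the definition of conditional complexity, $K_r(a,b\,|\,u,v)\le K_{r,t-1}(a,b\,|\,a,b)$. One application of Lemma~\ref{lem:AIPKSCD:cond}(ii) passes from precision $t-1$ to precision $t$ on the conditioning point at cost $O(1)$, and Lemma~\ref{lem:unichain}(ii) then identifies $K_{r,t}(a,b\,|\,a,b)$ with $K_r(a,b)-K_t(a,b)$ up to $O_{a,b}(\log r)$. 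Substituting both bounds into the decomposition above gives exactly
\[
K_r(u,v)\;\ge\;K_t(a,b)+K_{r-t,r}(x\,|\,a,b)-O_{a,b,x}(\log r).
\]

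The main obstacle I anticipate is the lower bound on $K_r(u,v\,|\,a,b)$: one must verify that the approximate denominator $q_1-u'$ stays safely bounded away from zero (which is guaranteed once $r-t$ exceeds an $x$-dependent constant, with any resulting precision shift absorbed by the $O_{a,b,x}(\log r)$ slack) and one must carefully quantify how the reduction from approximating $(u,v)$ to approximating $x$ interacts with the max-over-$q$ definition of $K_{r,r}(\cdot\,|\,a,b)$. Everything else is bookkeeping: tracking and consolidating the $O(\log\log\|\cdot\|)$ terms from Lemma~\ref{lem:unichain} and the logarithmic overheads from Lemmas~\ref{lem:MD:3.9} and~\ref{lem:AIPKSCD:cond} into the single $O_{a,b,x}(\log r)$ error.
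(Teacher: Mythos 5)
Your proposal is correct and follows essentially the same route as the paper's proof: the symmetry-of-information decomposition via Lemma~\ref{lem:unichain}(i), the reduction of a conditional $(u,v)$-approximation to an $x$-approximation via the formula $(v'-q_2)/(q_1-u')$ (using $|a-u|=2^{-t}/\sqrt{1+x^2}$ from the incidence constraint), the $K_{r,t-1}(a,b|a,b)$ bound from the inclusion $B_{2^{-r}}(u,v)\subseteq B_{2^{1-t}}(a,b)$ followed by Lemma~\ref{lem:unichain}(ii), and the trivial handling of the degenerate range $r<t+O_x(1)$. The only differences are cosmetic bookkeeping (you shift the conditioning precision $t-1\to t$ via Lemma~\ref{lem:AIPKSCD:cond}(ii) before applying Lemma~\ref{lem:unichain}(ii), whereas the paper applies Lemma~\ref{lem:unichain}(ii) at $t-1$ and then shifts $K_{t-1}(a,b)\to K_t(a,b)$ with Lemma~\ref{lem:MD:3.9}) and your use of a first-order sensitivity estimate where the paper carries out the explicit triangle-inequality chain of Observation~\ref{obs:routine}.
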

		\begin{proof}
			Fix $a,b,x\in\R$. By Lemma~\ref{lem:unichain}(i), for all $(u,v)\in B_1(a,b)$ and every $r\in\N$,
			\begin{equation}\label{eq:symmetry}
			K_r(u,v)\geq K_r(u,v|a,b)+K_r(a,b)-K_r(a,b|u,v)-O_{a,b}(\log r)\,.
			\end{equation}
			
			We bound $K_r(a,b)-K_r(a,b|u,v)$ first. Since $(u,v)\in B_{2^{-t}}(a,b)$, for every $r\geq t$ we have $B_r(u,v)\subseteq B_{2^{1-t}}(a,b)$, so
			\begin{equation*}
			K_r(a,b|u,v)\leq K_{r,t-1}(a,b|a,b)\,.
			\end{equation*}
			By Lemma~\ref{lem:unichain}(ii), then,
			\begin{align*}
			K_r(a,b)-K_r(a,b|u,v)&\geq K_r(a,b)-K_{r,t-1}(a,b|a,b)\\
			&\geq K_{t-1}(a,b)-O_{a,b}(\log r)\,.
			\end{align*}
			Lemma~\ref{lem:MD:3.9} tells us that
			\[K_{t-1}(a,b)\geq K_{t}(a,b)-O(\log t)\,.\]
			Therefore we have, for every $u,v\in B_1(a,b)$ and every $r\geq t$, 
			\begin{equation}\label{eq:Krz-Krzw}
			K_{r}(a,b)-K_r(a,b|u,v)\geq K_t(a,b)-O_{a,b}(\log r)\,.
			\end{equation}
			
			We now bound the term $K_r(u,v|a,b)$. Let $(u,v)\in\R^2$ be such that $ux+v=ax+b$. If  $t\leq r< t+|x|+2$, then $r-t=O_x(1)$, so by Lemma~\ref{lem:AIPKSCD:cond}(ii), $K_{r-t,r}(x|a,b)=O_x(1)$. In this case, $K_r(u,v|a,b)\geq K_{r-t,r}(x|a,b)-O_{a,b,x}(\log r)$
			holds trivially. Hence, assume $r\geq t+|x|+2$.
			
			Let $M$ be a Turing machine such that, whenever $q=(q_1,q_2)\in\Q^2$ and $U(\pi, q)=p=(p_1,p_2)\in\Q^2$, with $p_1\neq q_1$,
			\[M(\pi,q)=\frac{p_2-q_2}{p_1-q_1}\,.\]
			For each $q\in B_{2^{-r}}(a,b)\cap\Q^2$, let $\pi_q$ testify to $\hat{K}_r(u,v|q)$.
			Then
			\[U(\pi_q,q)\in B_{2^{-r}}(u,v)\cap\Q^2\,.\]
			It follows by a routine calculation (Observation~\ref{obs:routine}) that
			\[|M(\pi_q,q)-x|=\left|\frac{p_2-q_2}{p_1-q_1}-\frac{b-v}{a-u}\right|<2^{4+2|x|+t-r}\,.\]
			Thus, $M(\pi_q,q)\in B_{2^{4+2|x|+t-r}}(x)\cap\Q^2$. For some constant $c_M$, then,
			\begin{align*}
			\hat{K}_{r-4-2|x|-t}(x|q)&\leq \ell(\pi_q)+c_M\\
			&= \hat{K}_r(u,v|q)+c_M\,.
			\end{align*}
			Taking the maximum of each side over $q\in B_{2^{-r}}(a,b)\cap\Q^2$ and rearranging,
			\[K_{r}(u,v|a,b) \geq K_{r-4-2|x|-t,r}(x|a,b)- c_M\,.\]
			Then since Lemma~\ref{lem:AIPKSCD:cond}(ii) implies that
			\[K_{r-4-2|x|-t,r}(x|a,b)\geq K_{r-t,r}(x|a,b)-O_x(\log r)\,,\]
			we have shown, for every $(u,v)$ satisfying $ux+v=ax+b$ and every $r\geq t$,
			\begin{equation}\label{eq:Krwz}
			K_r(u,v|a,b)\geq K_{r-t,r}(x|a,b)-O_{a,b,x}(\log r)\,.
			\end{equation}
			The lemma follows immediately from (\ref{eq:symmetry}), (\ref{eq:Krz-Krzw}), and (\ref{eq:Krwz}).
		\end{proof}
		\subsection{Proof of Main Theorem}\label{ssec:main}
			To prove Theorem~\ref{thm:main}, we will show at every precision $r$ that there is an oracle relative to which the hypotheses of Lemma~\ref{lem:point} hold and $K_r(a,b,x)$ is still relatively large. These oracles will be based on the following lemma.
			\begin{lem}\label{lem:oracles}
				Let $n,r\in\N$, $z\in\R^n$, and $\eta\in\Q\cap[0,\dim(z)]$.
				Then there is an oracle $D=D(n,r,z,\eta)$ satisfying
				\begin{itemize}
					\item[\textup{(i)}] For every $t\leq r$, $K^D_t(z)=\min\{\eta r,K_t(z)\}+O(\log r)$.
					\item[\textup{(ii)}] For every $m,t\in\N$ and $y\in\R^m$,
					$K^{D}_{t,r}(y|z)=K_{t,r}(y|z)+O(\log r)$
					and
					$K_t^{z,D}(y)=K_t^z(y)+O(\log r)$.
				\end{itemize}
			\end{lem}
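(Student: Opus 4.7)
My plan is to construct $D$ explicitly as a truncated shortest program for an optimal $r$-precision approximation of $z$. If $K_r(z) \leq \eta r$, I would take $D = \emptyset$, for which both properties hold trivially. Otherwise, let $q \in B_{2^{-r}}(z) \cap \Q^n$ satisfy $K(q) = K_r(z)$, let $\pi$ be a shortest prefix-free program with $U(\pi) = q$, and define $D$ to encode the first $K_r(z) - \lfloor \eta r \rfloor$ bits of $\pi$ together with $O(\log r)$ bits of self-delimiting bookkeeping recording $r$, $\eta$, and $|\pi|$.

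Property (ii) would follow from showing $K^z(D) = O(\log r)$. Given oracle $z$, one enumerates prefix-free programs by length, simulates $U$, and identifies the shortest $\pi$ whose output lies in $B_{2^{-r}}(z) \cap \Q^n$, from which the required prefix is extracted using the bookkeeping bits. This yields $K^{z,D}_t(y) \leq K^z_t(y) + O(\log r)$, and the analogous argument using an arbitrary rational $q' \in B_{2^{-r}}(z) \cap \Q^n$ in place of $z$ gives $K^D_{t,r}(y|z) \leq K_{t,r}(y|z) + O(\log r)$; the reverse inequalities are immediate since additional oracles never increase Kolmogorov complexity.

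For the upper bound in (i), feeding $D$ plus $\lfloor \eta r \rfloor$ further input bits into a universal simulator reconstructs $\pi$ and hence $q$, giving $K^D_r(z) \leq \eta r + O(\log r)$. Combined with $B_{2^{-r}}(z) \subseteq B_{2^{-t}}(z)$ for $t \leq r$ and the trivial $K^D_t(z) \leq K_t(z) + O(1)$, this yields $K^D_t(z) \leq \min\{\eta r, K_t(z)\} + O(\log r)$.

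The main obstacle is the lower bound in (i). The crucial ingredient is that $D$ is itself incompressible, $K(D) \geq |D| - O(\log r)$, since any shorter description of $D$ concatenated with the remaining $\lfloor \eta r \rfloor$ bits would yield a program for $q$ shorter than $\pi$. Together with the symmetry of information from Lemma~\ref{lem:unichain}, this gives $K^D(p) \geq K(p) - K(D) - O(\log r)$ for every $p \in \Q^n$, and at $t = r$ directly produces $K^D_r(z) \geq \eta r - O(\log r)$. For $t < r$, I would additionally argue that any $t$-approximation $p$ of $z$ pins down $q$ only within $O(2^{n(r-t)})$ rational candidates, so $K(D | p) \leq n(r - t) + O(\log r)$, bounding the mutual information $I(D : p)$; then combining this control with the relativized Lemma~\ref{lem:MD:3.9} to propagate $K^D_r(z) \geq \eta r - O(\log r)$ down to intermediate scales is the crux of obtaining the lower bound $\min\{\eta r, K_t(z)\} - O(\log r)$ uniformly in $t \leq r$.
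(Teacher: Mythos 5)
Your construction differs genuinely from the paper's: you truncate an \emph{unconditional} shortest program for the optimal $r$-approximation $q$, whereas the paper first fixes $s = \max\{t \leq r : K_{t-1}(z) < \eta r\}$ (so $K_s(z) \approx \eta r$) and takes $D$ to encode a canonical shortest witness $\sigma$ to the \emph{conditional} complexity $K(z\uhr r \mid z\uhr s)$. Your treatment of the upper bound in (i), of the $t=r$ lower bound via incompressibility of $D$, and of (ii) are in the right spirit (though for (ii) the ``shortest $\pi$'' search must include $|\pi|$ in the bookkeeping and use a time-minimizing tie-break to be well defined, since a lexicographic choice runs into the halting problem). But the lower bound in (i) for intermediate $t$ --- precisely the step you call the crux --- is a genuine gap, and your sketch points in the wrong direction. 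To conclude $K^D_t(z) \geq K_t(z) - O(\log r)$ when $K_t(z) < \eta r$, symmetry of information requires that $D$ and a $t$-precision rational $p$ have \emph{small} mutual information, i.e.\ $K(D\mid p) \geq K(D) - O(\log r)$. Your proposed $K(D\mid p)\leq n(r-t)+O(\log r)$ is an \emph{upper} bound, which \emph{lower}-bounds $I(D:p)=K(D)-K(D\mid p)$ and therefore cannot help; and propagating $K^D_r(z)\geq\eta r-O(\log r)$ down to precision $t$ via Lemma~\ref{lem:MD:3.9} only yields $K^D_t(z)\geq\eta r-n(r-t)-O(\log r)$, which is vacuous once $t$ is bounded away from $r$. (The premise is also suspect: a $t$-approximation $p$ confines $q$ to a ball containing infinitely many rationals, not $O(2^{n(r-t)})$ of them.)

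The conditional construction is exactly what makes the paper's proof of this step go through. Because $\sigma$ witnesses $K(z\uhr r \mid z\uhr s)$ and $z\uhr t$ is computable from $z\uhr s$ and $t$ whenever $t\leq s$, one has $K(\sigma\mid z\uhr t)\geq K(\sigma\mid z\uhr s,t)-O(\log r)$; feeding this into two applications of symmetry of information gives $K^D_t(z)\geq K_t(z)+\big(K^D_s(z)-K_s(z)\big)-O(\log r)$, which closes the case $t\leq s$ using the already-established bound at $s$. A prefix of an \emph{unconditional} shortest program carries no analogous structural guarantee: whether it leaks information about $z\uhr t$ depends on the internals of the fixed universal machine, and you have no handle on that. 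If you want to preserve the ``cut a program'' idea, you would need to cut a program conditioned on $z\uhr s$ from the outset, which essentially recovers the paper's construction.
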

			The proof of this lemma, which uses standard methods, is deferred to Section~\ref{ssec:oracle} of the technical appendix. Informally, for some $s\leq r$ such that $K_s(z)$ is near $\eta r$, the oracle $D$ encodes $r$ bits of $z$ conditioned on $s$ bits of $z$. Unsurprisingly, access to this oracle lowers $K_t(z)$ to $K_s(z)$ whenever $t\geq s$ and has only a negligible effect when $t\leq s$, or when $r$ bits of $z$ are already known.
			
			{\renewcommand{\thethm}{\ref{thm:main}}
			\begin{thm}
				For every $a,b,x\in\R$ and $A\subseteq\N$,
				\begin{align*}
				\dim^A(x,ax+b)\geq \dim^A(x|a,b)+
				\min\big\{\dim^A(a,b),\,\dim^{a,b}(x)\big\}\,.
				\end{align*}
				In particular, for almost every $x\in\R$,
				$\dim(x,ax+b)=1+\min\{\dim(a,b),1\}$.
			\end{thm}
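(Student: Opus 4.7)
The plan is to combine Lemmas~\ref{lem:point}, \ref{lem:lines}, and \ref{lem:oracles} at each sufficiently large precision $r$. Fix a rational $\eta$ strictly less than $\min\{\dim^A(a, b), \dim^{a, b}(x)\}$ and a small $\varepsilon \in \Q_+$, and set $\delta := \dim^{a, b}(x) - \eta - \varepsilon$ so that $\delta > 0$ uniformly as $\varepsilon \to 0$. For each $r$, I would apply Lemma~\ref{lem:oracles} relative to $A$ with $z = (a, b)$ and parameter $\eta$ to obtain an oracle $D = D_r$ capping $K^{A,D}_r(a, b)$ at $\eta r + O(\log r)$ while leaving every quantity conditioned on $(a, b)$ essentially unchanged. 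The goal is to apply Lemma~\ref{lem:point} relativized to $D$ with the external oracle $A$ attached; since $K^A_r(w) \geq K^{A, D}_r(w) - O(1)$ for any $w$, any lower bound on $K^{A, D}_r(x, ax + b)$ transfers to $K^A_r(x, ax + b)$.

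Hypothesis~(i) of Lemma~\ref{lem:point} is immediate from Lemma~\ref{lem:oracles}(i) at $t = r$. For hypothesis~(ii), for each $(u, v) \in B_1(a, b)$ with $u x + v = a x + b$ and $t = -\log \|(a, b) - (u, v)\| \in (0, r]$, Lemma~\ref{lem:lines} relative to $D$ gives
\[
K^D_r(u, v) \geq K^D_t(a, b) + K^D_{r - t, r}(x \mid a, b) - O(\log r).
\]
Lemma~\ref{lem:oracles}(ii) peels $D$ off the second summand, after which Lemma~\ref{lem:AIPKSCD:4} bounds it below by $(\dim^{a, b}(x) - \varepsilon)(r - t) - o(r)$ for large $r$; a short case split on whether $t$ is large enough for $K^A_t(a, b) \geq (\dim^A(a, b) - \varepsilon) t$ to saturate the cap $\eta r$ in Lemma~\ref{lem:oracles}(i) then delivers $K^D_r(u, v) \geq (\eta - \varepsilon) r + \delta (r - t)$, which is~(ii). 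Lemma~\ref{lem:point} now yields $K^{A, D}_r(x, ax + b) \geq K^{A, D}_r(a, b, x) - (4\varepsilon / \delta) r - o(r)$, and chaining symmetry of information (Lemma~\ref{lem:unichain}(i)) with Lemma~\ref{lem:oracles}(ii) gives $K^{A, D}_r(a, b, x) \geq \eta r + K^A_r(x \mid a, b) - o(r)$. Hence for all large $r$,
\[
K^A_r(x, ax + b) \geq \bigl( \eta + \dim^A(x \mid a, b) - \varepsilon - 4\varepsilon/\delta \bigr) r - o(r);
\]
taking $\liminf$, then $\varepsilon \to 0$ (safe, since $\delta \to \dim^{a, b}(x) - \eta > 0$), then $\eta \to \min\{\dim^A(a, b), \dim^{a, b}(x)\}$ yields the main inequality. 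The ``in particular'' claim is immediate once one notes that almost every $x \in \R$ is Martin-L\"of random relative to $(a, b)$, so $\dim^{a, b}(x) = \dim(x \mid a, b) = 1$, and that $K_r(x, ax + b) \leq K_r(x) + K_r(a, b) + O(\log r)$ together with the trivial bound $K_r(x, ax + b) \leq 2 r + O(\log r)$ provides the matching upper bound.

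The main obstacle is maintaining $\delta > 0$ uniformly as $\varepsilon$ shrinks, so that the $4\varepsilon/\delta$ error term in Lemma~\ref{lem:point} vanishes in the limit; this is why $\eta$ must be fixed strictly below \emph{both} $\dim^A(a, b)$ and $\dim^{a, b}(x)$ at the outset rather than coupled to $\varepsilon$. A related subtlety is that the construction is relativized only by $D$, never by $A$, and Lemma~\ref{lem:oracles}(ii) ensures $D$ is transparent on quantities conditioned on $(a, b)$; this is exactly what allows the unrelativized $\dim^{a, b}(x)$ to appear in the final bound for an arbitrary oracle $A$.
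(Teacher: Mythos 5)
Your proposal is correct and takes essentially the same route as the paper: fix a rational $\eta$ below the eventual minimum, construct per-precision oracles $D_r$ via Lemma~\ref{lem:oracles} to cap the complexity of $(a,b)$ at $\eta r$, verify the hypotheses of Lemma~\ref{lem:point} (condition~(ii) via Lemma~\ref{lem:lines} and Lemma~\ref{lem:AIPKSCD:4}), apply Lemma~\ref{lem:point} with the external oracle $A$, recombine with symmetry of information, and take limits. The only deviations are cosmetic---your $\delta$ carries an extra $-\ve$ and your $\eta$ is taken strictly below both quantities whereas the paper allows $\eta=\dim^A(a,b)$ in its set $H$---and neither affects the limit; you also correctly flag the key subtlety that the argument must not fully relativize to $A$ when peeling $D$ off $K_{r-t,r}(x|a,b)$, which is exactly why the unrelativized $\dim^{a,b}(x)$ survives in the bound.
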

			\addtocounter{thm}{-1}}
			\begin{proof}
				Let $a,b,x\in\R$, and treat them as constant for the purposes of asymptotic notation here. Let $A\subseteq\N$,
				\[H=\Q\cap\big[0,\dim^A(a,b)\big]\cap\big[0,\dim^{a,b}(x)\big)\,,\] and $\eta\in H$. Let $\delta=\dim^{a,b}(x)-\eta>0$ and $\ve\in\Q_+$.
				For each $r\in\N$, let $D_r=D(2,r,(a,b),\eta)$, as defined in Lemma~\ref{lem:oracles}. 
				We claim that for every sufficiently large $r$, the conditions of Lemma~\ref{lem:point}, relativized to oracle $D_r$, are satisfied by these choices of $a$, $b$, $x$, $r$, $\delta$, $\ve$, $\eta$.
				
				Property (i) of Lemma~\ref{lem:oracles} guarantees that $K^{D_r}_r(a,b)\leq \eta r+O(\log r)$, so condition (i) of Lemma~\ref{lem:point} is satisfied for every sufficiently large $r$.
			
				To see that condition (ii) of Lemma~\ref{lem:point} is also satisfied, let $(u,v)\in B_1(a,b)$ such that $ax+b=ux+v$ and $t=-\log\|(a,b)-(u,v)\|\leq r$. Then by Lemma~\ref{lem:lines}, relativized to $D_r$, we have
				\[K^{D_r}_r(u,v)\geq K^{D_r}_t(a,b)+K_{r-t,r}^{D_r}(x|a,b)-O(\log r)\,.\]
				Therefore, by Lemma~\ref{lem:oracles} and Lemma~\ref{lem:AIPKSCD:4},
				\begin{align*}
				K^{D_r}_r(u,v)
				&\geq\min\{\eta r,K_t(a,b)\}+K_{r-t,r}(x|a,b)-O(\log r)\\
				&\geq\min\{\eta r,K_t(a,b)\}+K^{a,b}_{r-t}(x)-O(\log r)\\
				&\geq \min\{\eta r,\dim(a,b)t-o(t)\}+\dim^{a,b}(x)(r-t)-o(r)\\
				&\geq \min\{\eta r,\eta t-o(t)\}+(\eta+\delta)(r-t)-o(r)\\
				&= \eta t-o(t)+(\eta+\delta)(r-t)-o(r)\\
				&= \eta r+\delta\cdot(r- t)-o(r)\\
				&\geq (\eta-\ve)r+\delta\cdot(r-t)\,,
				\end{align*}
				whenever $r$ is large enough.
				
				For every sufficiently large $r$, then, the conclusion of Lemma~\ref{lem:point} applies here. Thus, for constant $a$, $b$, $\ve$, and $\eta$,
				\begin{align*}
				K^A_r(x,ax+b)&\geq K_r^{A,D_r}(x,ax+b)-O(1)
				\\&\geq K_r^{A,D_r}(a,b,x)-4\ve r/\delta-O(\log r)
				\\&=K^{A,D_r}_r(x|a,b)+K_r^{A,D_r}(a,b)-4\ve r/\delta-O(\log r)
				\\&=K^A_r(x|a,b)+\eta r-4\ve r/\delta-O(\log r)\,,
				\end{align*}
				where the last equality is due to the properties of $D_r$ guaranteed by Lemma~\ref{lem:oracles}.
				
				Dividing by $r$ and taking limits inferior,
				\begin{align*}
				\dim^A(x,ax+b)&\geq\liminf_{r\to\infty}\frac{K^A_r(x|a,b)+\eta r-4\ve r/\delta-O(\log r)}{r}\\
				&=\dim^A(x|a,b)+\eta -\frac{4\ve}{\delta}\,.
				\end{align*}
				Since this holds for every $\eta\in H$ and $\ve\in\Q_+$, we have
				\begin{equation*}\label{eq:primal}
					\dim^A(x,ax+b)\geq \dim^A(x|a,b)+\min\big\{\dim^A(a,b),\,\dim^{a,b}(x)\big\}\,.
				\end{equation*}
				
				The second part of the theorem statement follows easily, as relative to any given oracle for $(a,b)$, almost every $x\in\R$ is Martin-L\"of random and therefore has dimension 1. Applying Lemma~\ref{lem:AIPKSCD:4}, then, almost every $x\in\R$ has $\dim(x|a,b)\geq\dim^{a,b}(x)=1$.
			\end{proof}
			We can now easily answer the motivating question of whether or not there is a line in $\R^2$ on which every point has effective Hausdorff dimension 1.
			\begin{cor}\label{cor:gap}
				For every $a,b\in\R$, there exist $x,y\in\R$ such that
				\[\dim(x,ax+b)-\dim(y,ay+b)\geq 1\,.\]
				In particular, there is no line in $\R^2$ on which every point has dimension 1.
			\end{cor}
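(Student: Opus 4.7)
The plan is direct: apply the ``in particular'' clause of Theorem~\ref{thm:main} to obtain the high-dimensional witness, and take the trivial choice $y=0$ for the low-dimensional witness, so that the second point is simply $(0,b)$. This is exactly the strategy foreshadowed by the remark following the statement of Theorem~\ref{thm:main} in the introduction.

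First I would verify the inequality $\dim(0,b)\leq\min\{\dim(a,b),1\}$. Since the first coordinate is computable, a short argument (appending or discarding a fixed $0$ at precision $r$) gives $K_r(0,b)=K_r(b)+O(\log r)$, so $\dim(0,b)=\dim(b)$. Then $\dim(b)\leq 1$ because $b$ is a single real in $\R$, and $\dim(b)\leq\dim(a,b)$ because any rational approximation to $(a,b)$ at precision $r$ yields one to $b$ at the same precision with only $O(1)$ extra bits. Next I would select $x$ from the full-measure set on which Theorem~\ref{thm:main} gives $\dim(x,ax+b)=1+\min\{\dim(a,b),1\}$. Subtracting the two bounds immediately yields $\dim(x,ax+b)-\dim(0,b)\geq 1$, which is the first assertion of the corollary.

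For the ``in particular'' assertion I would split into cases. A non-vertical line can be written as $\{(x,ax+b):x\in\R\}$ and is handled directly by the first part of the corollary: two points whose dimensions differ by at least $1$ cannot both equal $1$. A vertical line $\{(c,t):t\in\R\}$ falls outside the parametrization of Theorem~\ref{thm:main}, but a short randomness argument suffices: $\dim(c,0)=\dim(c)\leq 1$, while for almost every $t\in\R$ the real $t$ is Martin-L\"of random relative to $c$, forcing $\dim(c,t)=2$. I do not anticipate a serious obstacle; the only bookkeeping concern is that the $O(\log r)$ terms arising from the transitions between $K_r(0,b)$, $K_r(b)$, and $K_r(a,b)$ be absorbed in the $\liminf$ defining $\dim$, which is routine via Lemma~\ref{lem:MD:3.9}.
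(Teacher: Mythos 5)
Your proposal is correct and takes essentially the same route as the paper: pick $y=0$ so that $\dim(0,b)=\dim(b)\leq\min\{\dim(a,b),1\}$, and take $x$ from the full-measure set where the second clause of Theorem~\ref{thm:main} gives $\dim(x,ax+b)=1+\min\{\dim(a,b),1\}$. One small caveat in your vertical-line aside: for $t$ Martin-L\"of random relative to $c$ one only has $\dim(c,t)=\dim(c)+1$, not $\dim(c,t)=2$ in general (that requires $\dim(c)=1$); the argument still goes through because the gap $\dim(c,t)-\dim(c,0)=1$ holds regardless. The paper's own proof does not separately treat vertical lines, so this extra case analysis is a welcome addition even if slightly overstated.
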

			\begin{proof}
				Theorem~\ref{thm:main} tells us that $\dim(x,ax+b)\geq1+\min\{\dim(a,b),1\}$ for almost every $x\in\R$. For $y=0$, we have $\dim(y,ay+b)=\dim(b)\leq\min\{\dim(a,b),1\}$.
			\end{proof}
			There are lines for which the inequality in Corollary~\ref{cor:gap} is strict. Consider, for example, a line through the origin whose slope $a$ is random. For every $x$ that is random relative to $a$, the point $(a,ax)$ has dimension $\dim(x)+\dim(a)=2$, but the origin itself has dimension 0.

		\section{An Application to Classical Fractal Geometry}\label{sec:app}
			\subsection{Hausdorff Dimension}
				As the name indicates, effective Hausdorff dimension was originally conceived as a constructive analogue to Hausdorff dimension, which is the most standard notion of dimension in fractal geometry. The properties and classical definition of Hausdorff dimension are beyond the scope of this paper; see~\cite{Falc85,Matt95} for discussion of those topics. Instead, we characterize it here according to a recent \emph{point-to-set principle}:
				\begin{thm}\label{thm:hausdorff}
					\textup{(J. Lutz and N. Lutz~\cite{LutLut17})}
					For every $n\in\N$ and $E\subseteq\mathbb{R}^n$, the Hausdorff dimension of $E$ is given by
					\[\dim_H(E)=\adjustlimits\min_{A\subseteq\N}\sup_{x\in E}\,\dim^A(x)\,.\]
				\end{thm}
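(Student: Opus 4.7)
The plan is to prove the two inequalities separately, both using the dictionary that a cover $\mathcal{C}$ of $E$ with $\sum_{B\in\mathcal{C}}\mathrm{diam}(B)^s\leq 1$ corresponds, by Kraft's inequality, to a prefix-free code in which ball $B$ receives a name of length $-s\log\mathrm{diam}(B)+O(1)$. Hausdorff-efficient covers thus yield short approximating programs for points in $E$, and conversely short approximating programs of $x$ produce small balls containing $x$.

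For $\dim_H(E)\leq\min_A\sup_{x\in E}\dim^A(x)$, I would fix any oracle $A$, set $s^*=\sup_{x\in E}\dim^A(x)$, and fix $t>s^*$ and $\varepsilon>0$. By definition of $\dim^A(x)$, every $x\in E$ admits, for infinitely many $r$, a program $\pi$ with $\ell(\pi)\leq tr$ and $U^A(\pi)\in B_{2^{-r}}(x)$. Hence for each $N\in\N$ the collection of balls $B_{2^{1-r}}(U^A(\pi))$, ranging over all $r\geq N$ and $\ell(\pi)\leq tr$, covers $E$, has diameters $\leq 2^{2-N}$, and has total $(t+\varepsilon)$-weight bounded by $\sum_{r\geq N}2^{tr+1}\cdot(2^{2-r})^{t+\varepsilon}=O(2^{-\varepsilon N})$. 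Letting $N\to\infty$ gives $\mathcal{H}^{t+\varepsilon}(E)=0$, so $\dim_H(E)\leq t+\varepsilon$ for every such $t$ and $\varepsilon$, completing the direction.

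For the reverse inequality, I would construct a single oracle $A$ witnessing $\sup_{x\in E}\dim^A(x)\leq\dim_H(E)$. Enumerate rationals $t_j\searrow\dim_H(E)$; for each $j,k\in\N$ select a cover $\mathcal{C}_{j,k}$ of $E$ by open balls centered at dyadic rationals, with all diameters $\leq 2^{-k}$ and $\sum_{B\in\mathcal{C}_{j,k}}\mathrm{diam}(B)^{t_j}\leq 1$, which is possible because $t_j>\dim_H(E)$ (after replacing an arbitrary efficient cover by balls of comparable radii at bounded overhead). Interleave all such covers and their Kraft codings into a single oracle $A$. Then for each $x\in E$ and each $j,k$, the cover $\mathcal{C}_{j,k}$ supplies a ball $B_{x,j,k}\ni x$ of dyadic radius $2^{-r}\leq 2^{-k}$; reading off its Kraft codeword and adding $O(\log r+\log j)$ bits of bookkeeping yields $K_r^A(x)\leq t_jr+O(\log r+\log j)$ at an infinite sequence of precisions $r\geq k$. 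Hence $\liminf_{r\to\infty}K_r^A(x)/r\leq t_j$ for every $j$, and letting $t_j\searrow\dim_H(E)$ gives $\dim^A(x)\leq\dim_H(E)$.

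The main obstacle lies in this second direction: the covers supplied by the Hausdorff definition are abstract, but $A$ must support effective decoding of Kraft codes that identify balls by center and radius. The standard workaround is to pass to covers by open balls with dyadic-rational centers and radii (losing only a constant factor in the $s$-weight) and to interleave the countably many codings (one per pair $(j,k)$) into $A$ so that the decoding overhead stays $O(\log r+\log j)$ and does not accidentally lower $K_r^A(y)$ for points $y\notin E$ in a way that interferes. The remaining subtlety---that the witnessing precision $r$ tends to $\infty$ as $k\to\infty$, so that a $\liminf$ bound is indeed captured---is automatic from the constraint $\mathrm{diam}(B)\leq 2^{-k}$.
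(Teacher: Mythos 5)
This theorem is cited in the paper from~\cite{LutLut17} and used as a black box; the paper contains no proof of it, so there is nothing internal to compare against. Your two-direction argument is, however, correct and matches the standard strategy of the original proof: the easy direction ($\dim_H(E)\leq\sup_{x\in E}\dim^A(x)$ for every $A$) converts short $A$-programs into a cover whose $(t+\varepsilon)$-weight is geometrically summable, and the hard direction builds a single oracle from a countable family of Hausdorff-efficient covers plus Kraft codewords for their balls. A few things you should make explicit if this were to be written out in full: the covers $\mathcal{C}_{j,k}$ must be taken countable (Lindel\"of) and replaced by balls with dyadic centers and dyadic radii at bounded multiplicative cost so that the oracle can encode them and the decoder can output a rational witness; the Kraft codeword length for a ball $B$ is $\lceil -t_j\log\mathrm{diam}(B)\rceil$, which is justified by $\sum_B\mathrm{diam}(B)^{t_j}\leq 1$; and the precision at which the bound $K_r^A(x)\leq t_j r+O(\log r+\log j)$ is achieved is $r=\lceil-\log\mathrm{diam}(B_{x,j,k})\rceil\geq k$, which tends to infinity with $k$, so the $\liminf$ is indeed controlled. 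With those points spelled out the argument is sound.
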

			\subsection{Generalized Sets of Furstenberg Type}
				A \emph{set of Furstenberg type} with parameter $\alpha$ is a set $E\subseteq\R^2$ such that, for every $e\in S^1$ (the unit circle in $\R^2$), there is a line $\ell_e$ in the direction $e$ satisfying $\dim_H(E\cap\ell_e)\geq \alpha$. Finding the minimum possible dimension of such a set is an important open problem with connections to Falconer's distance set conjecture and to Kakeya sets~\cite{KatTao01,Wolff99}.
				
				Molter and Rela~\cite{MolRel12} introduced a natural generalization of Furstenberg sets, in which the set of directions may itself have fractal dimension. Formally, a set $E\subseteq\R^2$ is in the class $F_{\alpha\beta}$ if there is some set $J\subseteq S^1$ such that $\dim_H(J)\geq\beta$ and for every $e\in J$, there is a line $\ell_e$ in the direction $e$ satisfying $\dim_H(E\cap\ell_e)\geq \alpha$. They proved the following lower bound on the dimension of such sets.
				\begin{thm}\label{thm:molrel}
					\textup{(Molter and Rela~\cite{MolRel12})}
					For all $\alpha,\beta\in(0,1]$ and every set $E\in F_{\alpha\beta}$,
					\[\dim_H(E)\geq\alpha+\max\left\{\frac{\beta}{2}, \alpha+\beta-1\right\}\,.\]
				\end{thm}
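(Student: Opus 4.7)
The plan is to prove $\dim_H(E) \geq \alpha + (\alpha+\beta-1)$ and $\dim_H(E) \geq \alpha + \beta/2$ separately; their maximum is the stated bound. The first inequality falls out of the paper's own machinery. By Theorem~\ref{thm:hausdorff} fix an oracle $A$ with $\sup_{p\in E}\dim^A(p)=\dim_H(E)$. The same $A$ automatically satisfies $\sup_{e\in J}\dim^A(e)\geq\dim_H(J)\geq\beta$, because the point-to-set principle minimizes over oracles. For each $\epsilon>0$ pick $e\in J$ with $\dim^A(e)\geq\beta-\epsilon$, parametrize $\ell_e$ in slope-intercept form $(a,b)$ (swapping coordinates when $e$ is near vertical), and observe $\dim^A(a,b)\geq\dim^A(a)\geq\beta-\epsilon$ since $e\mapsto a$ is a computable bi-Lipschitz bijection off the vertical direction. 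A third application of Theorem~\ref{thm:hausdorff}, now to $E\cap\ell_e$ relative to the oracle $(A,a,b)$, yields $(x,ax+b)\in E$ with $\dim^{A,a,b}(x)\geq\alpha-\epsilon$, hence also $\dim^{a,b}(x)\geq\alpha-\epsilon$. Feeding these into Theorem~\ref{thm:main} relativized to $A$, together with Lemma~\ref{lem:AIPKSCD:4}, gives
\[\dim^A(x,ax+b)\geq\dim^{A,a,b}(x)+\min\{\dim^A(a,b),\dim^{a,b}(x)\}\geq\alpha+\min\{\alpha,\beta\}-O(\epsilon).\]
Since $(x,ax+b)\in E$, letting $\epsilon\to 0$ yields $\dim_H(E)\geq\alpha+\min\{\alpha,\beta\}$, and a case check shows $\min\{\alpha,\beta\}-(\alpha+\beta-1)=\min\{1-\alpha,1-\beta\}\geq 0$ for $\alpha,\beta\in(0,1]$, so this already dominates $\alpha+(\alpha+\beta-1)$.

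The bound $\alpha+\beta/2$ is not accessible from Theorem~\ref{thm:main} alone: that theorem is pointwise, and a single line only affords $\alpha+\min\{\alpha,\beta\}$, which is strictly less than $\alpha+\beta/2$ when $\beta>2\alpha$. For this half I would fall back on a classical Frostman-measure / energy argument. Using a measurable disintegration, take a $(\beta-\epsilon)$-Frostman measure $\mu_J$ on $J$ and, for $\mu_J$-a.e.\ $e$, an $(\alpha-\epsilon)$-Frostman measure $\nu_e$ on $E\cap\ell_e$. Form $\lambda=\int \nu_e\,d\mu_J(e)$ on $E$ and aim to show its $s$-energy is finite for every $s<\alpha+\beta/2$. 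After unfolding
\[I_s(\lambda)=\int\!\!\int\!\!\int\!\!\int |p-q|^{-s}\,d\nu_e(p)\,d\nu_{e'}(q)\,d\mu_J(e)\,d\mu_J(e'),\]
use the elementary geometric fact $|p-q|\gtrsim\sqrt{tr}\,\theta(e,e')$ for $p=p_0+te$ and $q=p_0+re'$, with $p_0$ the intersection of $\ell_e$ and $\ell_{e'}$. Together with the $\alpha$-Frostman property of $\nu_e,\nu_{e'}$ and a split according to the relative sizes of $|t-r|$ and $\sqrt{tr}\,\theta$, the inner double integral reduces to something of order $\theta(e,e')^{-(s-\alpha)}$ up to log factors. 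The outer integral over $\mu_J\otimes\mu_J$ is then essentially an $(s-\alpha)$-energy of $\mu_J$, which is finite for $s-\alpha<\beta/2$ by a Cauchy--Schwarz in the direction variable combined with the $\beta$-Frostman bound. Frostman's lemma then gives $\dim_H(E)\geq\alpha+\beta/2$.

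The main obstacle is the $\alpha+\beta/2$ half. The $\alpha+\beta-1$ piece is essentially free from Theorem~\ref{thm:main}, but $\beta/2$ is genuinely non-pointwise: it encodes a Wolff-style $L^2$ saving as two lines from the $\beta$-dimensional family of directions pinch toward one another. Extracting the threshold $\beta/2$ rather than the weaker (and easier) $\beta$ out of the outer integral requires the Cauchy--Schwarz in the direction variable just described, and this is the technical heart of the Molter--Rela argument, mirroring the classical $\beta=1$ case. Standard nuisances—measurable disintegration of the family $\{\nu_e\}$, handling vertical lines by a symmetric coordinate swap, and the vacuous case $\alpha+\beta\leq 1$—are secondary.
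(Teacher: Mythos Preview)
The paper does not prove this theorem. Theorem~\ref{thm:molrel} is stated with attribution to Molter and Rela~\cite{MolRel12} as background; the paper's own contribution is the distinct bound in Theorem~\ref{thm:furstenberg}, namely $\dim_H(E)\geq\alpha+\min\{\beta,\alpha\}$. So there is no ``paper's own proof'' of Theorem~\ref{thm:molrel} to compare against.

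Your first paragraph is not a proof of Theorem~\ref{thm:molrel} but rather a faithful reproduction of the paper's proof of Theorem~\ref{thm:furstenberg}: point-to-set principle to fix $A$, choose a direction $e$ with $\dim^A(e)$ near $\beta$, pass to slope-intercept $(a,b)$, pick $x$ on the line with $\dim^{A,a,b}(x)$ near $\alpha$, and invoke Theorem~\ref{thm:main}. Your observation that $\min\{\alpha,\beta\}\geq\alpha+\beta-1$ on $(0,1]^2$ is correct, so this does recover the $\alpha+\beta-1$ half of Theorem~\ref{thm:molrel} as a corollary of the paper's result.

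For the $\alpha+\beta/2$ half you correctly diagnose that Theorem~\ref{thm:main} cannot reach it when $\beta>2\alpha$, and you correctly locate the obstruction: the saving comes from an $L^2$-type interaction between pairs of directions, which no single-line pointwise estimate can see. Your Frostman/energy sketch is in the right spirit, though Molter and Rela actually argue via dyadic coverings and a combinatorial counting lemma rather than a continuous energy integral; either way the crux is the same square-root gain in the direction variable. The measurable-selection issue you flag (choosing $\nu_e$ jointly measurably in $e$) is a genuine nuisance in the energy formulation and is one reason the covering approach is cleaner. In any case, this half lies entirely outside the present paper and must be sourced from~\cite{MolRel12}.
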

				
				We now show that Theorem~\ref{thm:main} yields an improvement on this bound whenever $\alpha,\beta<1$ and $\beta/2<\alpha$.
				\begin{thm}\label{thm:furstenberg}
					For all $\alpha,\beta\in(0,1]$ and every set $E\in F_{\alpha\beta}$,
					\[\dim_H(E)\geq \alpha+\min\{\beta,\alpha\}\,.\]
				\end{thm}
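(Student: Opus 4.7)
The idea is to combine the point-to-set principle (Theorem~\ref{thm:hausdorff}) with the relativized form of Theorem~\ref{thm:main}, applying the former at two nested levels: once to the direction set $J$, and once to each slice $E \cap \ell_e$. To show $\dim_H(E) \geq \alpha + \min\{\beta,\alpha\}$ it suffices, by the point-to-set principle, to fix an arbitrary oracle $A \subseteq \N$ and an arbitrary $\varepsilon > 0$ and exhibit a point $p \in E$ with $\dim^A(p) \geq \alpha + \min\{\beta,\alpha\} - O(\varepsilon)$.

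I would proceed in two extraction steps. First, applying the point-to-set principle to $J \subseteq S^1$ relative to $A$ (using $\dim_H(J) \geq \beta$, and discarding the two vertical directions, which are a zero-dimensional finite set and so do not affect $\dim_H(J)$), I would obtain a non-vertical $e \in J$ with $\dim^A(e) \geq \beta - \varepsilon$; parameterizing $\ell_e$ by its slope--intercept pair $(a,b) \in \R^2$ then gives $\dim^A(a,b) \geq \dim^A(a) \geq \beta - \varepsilon$, since the map $e \mapsto a$ is bi-Lipschitz away from the verticals. Second, applying the point-to-set principle to the slice $E \cap \ell_e$ (of Hausdorff dimension at least $\alpha$) relative to the combined oracle $(A,a,b)$, I would produce a point $p = (x_0, ax_0 + b) \in E$ with $\dim^{A,a,b}(p) \geq \alpha - \varepsilon$; since $p$ and $x_0$ are mutually computable from $(a,b)$ with only logarithmic overhead, this upgrades to $\dim^{A,a,b}(x_0) \geq \alpha - \varepsilon$.

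It then remains to plug these bounds into the relativized main theorem,
\[\dim^A(x_0, ax_0 + b) \geq \dim^A(x_0 \mid a, b) + \min\bigl\{\dim^A(a,b),\, \dim^{a,b}(x_0)\bigr\}.\]
Lemma~\ref{lem:AIPKSCD:4} yields $\dim^A(x_0 \mid a, b) \geq \dim^{A,a,b}(x_0) \geq \alpha - \varepsilon$; oracle-monotonicity of dimension (more oracle can only shrink Kolmogorov complexity) gives $\dim^{a,b}(x_0) \geq \dim^{A,a,b}(x_0) \geq \alpha - \varepsilon$; and we already have $\dim^A(a,b) \geq \beta - \varepsilon$. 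Combining these three inequalities yields $\dim^A(p) \geq \alpha + \min\{\beta,\alpha\} - O(\varepsilon)$, and letting $\varepsilon \to 0$ and then minimizing over $A$ finishes the proof. The main delicate point is the oracle bookkeeping: the argument crucially uses the fact, highlighted in the restatement of Theorem~\ref{thm:main}, that the third term stays as $\dim^{a,b}(x_0)$ rather than becoming $\dim^{A,a,b}(x_0)$, since our extraction of $x_0$ is necessarily relative to $(A,a,b)$ while the conclusion we want for $\dim^A(p)$ must involve only $A$ in the relativization.
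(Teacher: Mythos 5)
Your proposal is correct and follows essentially the same route as the paper: both apply the point-to-set principle twice---once to $J$ to extract a direction $e$ of high $\dim^A$ (hence a pair $(a,b)$ with $\dim^A(a,b)\geq\beta-\ve$ via the bi-Lipschitz map $e\mapsto a$), and once, relative to $(A,a,b)$, to the slice $E\cap\ell_e$ to get $x_0$ with $\dim^{A,a,b}(x_0)\geq\alpha-\ve$---and then feed these bounds into the relativized Theorem~\ref{thm:main} together with Lemma~\ref{lem:AIPKSCD:4} and oracle monotonicity. The only cosmetic difference is that the paper fixes $A$ up front as a Hausdorff-optimal oracle for $E$ and works with $\sup_{x\in S}\dim^{A,a,b}(x)$ over $S=\{x:(x,ax+b)\in E\}$, whereas you take $A$ arbitrary, pick a single near-optimal $x_0$, and minimize over $A$ at the very end; these are interchangeable. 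One small inaccuracy in your closing remark: the extra strength in the restated Theorem~\ref{thm:main} (that the last term stays $\dim^{a,b}(x)$ rather than becoming $\dim^{A,a,b}(x)$, and that the first term is conditional rather than relative) is not actually exploited in this application. You bound both of those positions from below by $\dim^{A,a,b}(x_0)\geq\alpha-\ve$ anyway, so the naively relativized, weaker form of the main theorem would give the identical conclusion here.
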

				\begin{proof}
					Let $\alpha,\beta\in(0,1]$, $\ve\in(0,\beta)$, and $E\in F_{\alpha\beta}$. Using Theorem~\ref{thm:hausdorff}, let $A$ satisfy
					\[\sup_{z\in E}\dim^A(z)=\dim_H(E)\,.\]
					and $e\in S^1$ satisfy $\dim^A(e)=\beta-\ve>0$. Let $\ell_e$ be a line in direction $e$ such that $\dim_H(\ell_e\cap E)\geq\alpha$. Since $\dim(e)>0$, we know $e\not\in\{(0,1),(0,-1)\}$, so we may let $a,b\in\R$ be such that $L_{a,b}=\ell_e$. Notice that $\dim^A(a)=\dim^A(e)$ because the mapping $e\mapsto a$ is computable and bi-Lipschitz in some neighborhood of $e$. Let $S=\{x:(x,ax+b)\in E\}$, which is similar to $\ell_e\cap E$, so $\dim_H(S)\geq \alpha$ also. We now have
					\begin{align*}
						\dim_H(E)&=\sup_{z\in E}\dim^{A}(z)\\
						&\geq\sup_{z\in\ell_e\cap E}\dim^{A}(z)\\
						&=\sup_{x\in S}\dim^{A}(x,ax+b)\,.
					\end{align*}
					By Theorem~\ref{thm:main} and Lemma~\ref{lem:AIPKSCD:4}, both relativized to $A$, 
					\begin{align*}
						\sup_{x\in S}\dim^{A}(x,ax+b)
						&\geq\sup_{x\in S}\left\{\dim^{A,a,b}(x)+\min\{\dim^A(a,b),\dim^{A}(x|a,b)\}\right\}\\
						&\geq\sup_{x\in S}\left\{\dim^{A,a,b}(x)+\min\{\dim^A(a,b),\dim^{A,a,b}(x)\}\right\}\\
						&\geq\sup_{x\in S}\dim^{A,a,b}(x)+\min\left\{\dim^A(a),\sup_{x\in S}\dim^{A,a,b}(x)\right\}\,.
					\end{align*}
					Theorem~\ref{thm:hausdorff} gives
					\[\sup_{x\in S}\dim^{A,a,b}(x)\geq\dim_H(S)\geq\alpha\,,\]
					so we have shown, for every $\ve\in(0,\beta)$, that $\dim_H(E)\geq\alpha+\min\{\beta-\ve,\alpha\}$.
				\end{proof}
	\section{Conclusion}
		With Theorem~\ref{thm:main}, we have taken a significant step in understanding the structure of algorithmic information in Euclidean spaces. Progress in that direction is especially consequential in light of Theorem~\ref{thm:furstenberg}, which, aside from its direct value as a mathematical result, demonstrates conclusively that algorithmic dimensional methods can provide new insights into classical fractal geometry. This motivates further investigation of algorithmic fractal geometry in general and of effective Hausdorff dimension on lines in particular; improvements on our lower bound or extensions to higher dimensions would have implications for important questions about Furstenberg or Kakeya sets. Our results also motivate a broader search for potential applications of algorithmic dimension to problems in classical fractal geometry.
	\bibliography{BDPL}
	\clearpage
	\pagenumbering{arabic}
	\renewcommand*{\thepage}{A\arabic{page}}
	\appendix
	\section{Technical Appendix}\label{app:main}
	\renewcommand\thethm{\thesection.\arabic{thm}}
	\setcounter{thm}{0}

	\subsection{Initial segments versus $K$-optimizing rationals}\label{sec:trunc}
	In this section we formalize the relationship between $K_r(x)$ and the \emph{initial segment complexity} $K(x\uhr r)$. The three lemmas in this section are proved by standard techniques. We use these results elsewhere in the technical appendix, but not in the body of the paper.
	
	For $x=(x_1,\ldots,x_n)\in\R^n$ and $r\in\N$, let $x\uhr r=(x_1\uhr r,\ldots,x_n\uhr r)$, where each $x_i\uhr r=2^{-r} \lfloor 2^r x_i\rfloor$, the truncation of $x_i$ to $r$ bits to the right of the binary point. For $r\in(0,\infty)$, let $x\uhr r=x\uhr\lceil r\rceil$.
	\begin{lem}\label{lem:trunc}
		For every $m,n\in\N$, there is a constant $c$ such that for all $x\in\R^m$, $p\in\Q^n$, and $r\in\N$,
		\[|\hat{K}_r(x|p)-K(x\uhr r\,|\,p)|\leq K(r)+c\,.\]
	\end{lem}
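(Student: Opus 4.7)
My plan is to prove the two implied inequalities separately:
\[
\hat K_r(x|p)\le K(x\uhr r\,|\,p)+K(r)+c
\qquad\text{and}\qquad
K(x\uhr r\,|\,p)\le \hat K_r(x|p)+K(r)+c,
\]
each by exhibiting an explicit prefix-free Turing-machine construction. Throughout, $m$ and $n$ are treated as constants, so $m$-dependent overhead is absorbed into the additive constant.

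For the first inequality I would construct a machine $M$ that, given oracle $p$ and input $\pi_1\pi_2\sigma$, decodes $\pi_2$ as a self-delimiting description of $r$, runs $U(\pi_1, p)$ to obtain $y\in\Q^m$ (intended to be $x\uhr r$), reads $\sigma$ as an $O_m(1)$-bit advice string, and outputs a canonical rational $q=q(y,r,\sigma)\in\Q^m$ engineered so that $\|q-x\|<2^{-r}$ whenever $y=x\uhr r$ and $\sigma$ is the correct advice. Letting $\pi_1,\pi_2,\sigma$ be witnesses of $K(x\uhr r|p)$, $K(r)$, and the correct advice respectively, $M$ produces a witness to $\hat K_r(x|p)$, giving the desired bound.

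For the second inequality, let $q^*\in B_{2^{-r}}(x)\cap\Q^m$ witness $\hat K_r(x|p)$. Since $|q^*_i-x_i|\le\|q^*-x\|_2<2^{-r}$ for every coordinate, the integer $\lfloor 2^rx_i\rfloor$ lies in a window of at most two integers around $\lfloor 2^rq^*_i\rfloor$, so $x\uhr r$ is one of at most $2^m=O_m(1)$ candidates computable from $(q^*,r)$. I would build a machine $M'$ reading an $O_m(1)$-bit advice string that selects the correct candidate coordinatewise, followed by a conditional program for $q^*$ given $p$ and a self-delimiting program for $r$, and outputting the chosen candidate. This yields $K(x\uhr r|p)\le \hat K_r(x|p)+K(r)+O_m(1)$.

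The main obstacle lies in the first direction. In the Euclidean norm on $\R^m$ the raw truncation $x\uhr r$ can sit at distance up to $\sqrt m\cdot 2^{-r}$ from $x$, so for $m\ge 4$ no single fixed shift of $y=x\uhr r$ is guaranteed to lie inside $B_{2^{-r}}(x)$. The remedy is to fix an $m$-dependent (but $r$-independent) covering of the displacement hypercube $[0,2^{-r})^m$ by finitely many Euclidean balls of radius $2^{-r}$ whose centers are canonical rationals computable from $(y,r)$, and to pass $O_m(1)$ advice bits identifying which ball contains $x-y$. This is the one nontrivial step; the remaining details are standard bookkeeping with the definitions of $\hat K_r$, $K$, and $x\uhr r$.
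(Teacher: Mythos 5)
Your proposal is correct in both directions, and the second direction matches the paper's proof almost exactly: take a $K$-optimizing rational $q^*$ for $\hat K_r(x|p)$, observe that $x\uhr r$ is one of $O_m(1)$ nearby $r$-dyadic points computable from $(q^*,r)$, and spend $O_m(1)$ advice bits to select it. (One small slip: from $|q^*_i-x_i|<2^{-r}$ you get that $\lfloor 2^rx_i\rfloor$ lies in a window of \emph{three} integers around $\lfloor 2^rq^*_i\rfloor$, not two, so the candidate count is $3^m$ rather than $2^m$; this is still $O_m(1)$ and does not affect the argument.)

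For the first direction the two proofs diverge. The paper observes that $x\uhr r\in B_{2^{-r}\sqrt m}(x)$, so $K(x\uhr r\,|\,p)\geq\hat K_{r-\log(m)/2}(x|p)$, and then invokes Lemma~\ref{lem:AIPKSCD:cond}(i) as a black box to restore the lost $\tfrac12\log m$ bits of precision at cost $K(r)+O_m(1)$. You instead build the machine directly, covering the displacement cube $[0,2^{-r})^m$ by an $r$-independent family of $O_m(1)$ balls of radius $2^{-r}$ and spending $O_m(1)$ advice bits to name the ball containing $x-x\uhr r$. This is sound and essentially re-derives, in the special case needed, what the cited precision lemma does in general; your version is more self-contained, the paper's is shorter. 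Either way the crucial accounting is the same: a self-delimiting description of $r$ (contributing $K(r)$), the conditional program for $x\uhr r$ given $p$, and $O_m(1)$ extra bits folded into the constant.
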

	\begin{proof}	
		Let $m,n,r\in\N$, $x\in\R^m$, and $p\in\Q^n$. Observe that $x\uhr r\in B_{2^{-r}\sqrt{m}}(x)$, and therefore $K(x\uhr r\,|\,p)\geq \hat{K}_{r-\log(m)/2}(x|p)$.
		Thus, by Lemma~\ref{lem:AIPKSCD:cond}(i), there exists $c_1\in\N$ depending only on $m$ such that
		\[\hat{K}_r(x|p)\leq K(x\uhr r\,|\,p) + K(r) + c_1\,.\]
		
		For the other direction, observe that for every $q\in\Q^n\cap B_{2^{-r}}(x)$, we have $x\uhr r\in B_{2^{-r}(1+\sqrt{m})}(q)$, and that
		$B_{2^{-r}(1+\sqrt{m})}(q)$ contains at most $(2(1+\sqrt{m}))^m$ $r$-dyadic points, i.e., points in the set
		\[\mathcal{Q}_t^m=\{2^{-r}z:z\in\Z^m\}\,.\]
		Let $M$ be a Turing machine that, on input $(\pi,p')\in\{0,1\}^*\times\Q^n$, does the following. If $\pi=\pi_1\pi_2\pi_3$, with $U(\pi_1,p')=q\in\Q^m$, $U(\pi_2)=t\in\N$, and $U(\pi_3)=k\in\N$, then $M$ outputs the (lexicographically) $k$\textsuperscript{th} point in $\mathcal{Q}_r^m\cap B_{2^{-t}(1+\sqrt{m})}(q)$.
		
		Now let $\pi_q$ testify to $\hat{K}_r(x|p)$, let $\pi_r$ testify to $K(r)$, and let $q=U(\pi_q,p)$. There is some $k\leq(2(1+\sqrt{m}))^m$ such that $x\uhr r$ is the $k$\textsuperscript{th} point in $\mathcal{Q}_r^m\cap B_{2^{-r}(1+\sqrt{m})}(q)$; let $\pi_k$ testify to $K(k)$. Then $M(\pi_q\pi_r\pi_k,p)=x\uhr r$, so there is some machine constant $c_M$ for $M$ such that
		\begin{align*}
			K(x\uhr r\,|\,p)&\leq \ell(\pi_q)+\ell(\pi_r)+\ell(\pi_k)+c_M\\
			&=\hat{K}_r(x|p)+K(x)+K(k)+c_M
		\end{align*}
		It is well known (see, e.g.,~\cite{DowHir10}) that there is some constant $c_2$ such that
		\begin{align*}
			K(k)&\leq \log k+2\log\log k + c_2\\
			&\leq m\log(2(1+\sqrt{m}))+2\log (m\log(2(1+\sqrt{m})))+c_2\,.
		\end{align*}
		The above value depends only on $m$, as does $c_M$; let $c_3$ be their sum. Then
		\[K(x\uhr r\,|\,p)\leq \hat{K}_r(x|p)+K(r)+c_3\,,\]
		so $c=\max\{c_1,c_3\}$ affirms the lemma.
	\end{proof}
	Observing that there exists a constant $c_0$ such that, for all $m\in\N$ and $q^m\in\Q$, $|K(q)-K(q|0)|\leq c_0$, we also have the following.
	\begin{cor}\label{cor:trunc}
		For every $m\in\N$, there is a constant $c$ such that for every $x\in\R^m$ and $r\in\N$,
		\[|K_r(x)-K(x\uhr r)|\leq K(r)+c\,.\]
	\end{cor}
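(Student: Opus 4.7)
The corollary is essentially the ``unconditioned'' version of Lemma~\ref{lem:trunc}, so the plan is to deduce it by applying that lemma with a trivial conditioning string and then stripping off the conditioning using the stated observation. Specifically, I would pick some fixed rational, e.g.\ $p = 0 \in \Q$ (so $n=1$ in Lemma~\ref{lem:trunc}), and use the fact that conditioning on a fixed computable object costs only $O(1)$ bits of Kolmogorov complexity.

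Concretely, the first step is to invoke Lemma~\ref{lem:trunc} at $p=0$, which gives a constant $c_1$ (depending only on $m$) with
\[|\hat K_r(x\mid 0) - K(x\uhr r \mid 0)| \le K(r) + c_1\,.\]
The second step is to pass from the conditional quantities to the unconditional ones. Since $\hat K_r(x\mid 0)$ and $K_r(x)$ are minima of $K(p\mid 0)$ and $K(p)$ respectively over the same set $B_{2^{-r}}(x)\cap\Q^m$, the stated observation that $|K(q)-K(q\mid 0)|\le c_0$ for every rational $q$ transfers directly to the minima, yielding $|K_r(x)-\hat K_r(x\mid 0)|\le c_0$. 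Applying the same observation to the rational point $x\uhr r\in\Q^m$ gives $|K(x\uhr r)-K(x\uhr r\mid 0)|\le c_0$.

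The third step is simply the triangle inequality: combining the three displayed bounds yields
\[|K_r(x) - K(x\uhr r)| \le K(r) + c_1 + 2c_0\,,\]
so setting $c = c_1 + 2c_0$ (which depends only on $m$) completes the proof. There is no real obstacle here; the only thing to be mildly careful about is that the observation about $|K(q)-K(q\mid 0)|$ applies uniformly in $q\in\Q^m$, so it is legitimate to take it inside the minimum defining $K_r$ and $\hat K_r$. Everything else is bookkeeping.
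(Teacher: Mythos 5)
Your proof is correct and follows exactly the same route as the paper, which presents the corollary as an immediate consequence of Lemma~\ref{lem:trunc} together with the observation that $|K(q)-K(q\mid 0)|\leq c_0$ for all rationals $q$; you have merely spelled out the (correct) bookkeeping of passing the $O(1)$ bound through the minima defining $K_r$ and $\hat K_r$ and applying the triangle inequality.
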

	\begin{cor}\label{cor:trunccond}
		For every $m,n\in\N$, there is a constant $c$ such that for all $x\in\R^m$, $y\in\R^n$, and $r,s\in\N$,
		\[|K_{r,s}(x|y)-K(x\uhr r\,|\,y\uhr s)|\leq K(r)+K(s)+c\,.\]
	\end{cor}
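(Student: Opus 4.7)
The plan is to split the statement into two inequalities and prove each by reducing to Lemma~\ref{lem:trunc}. The technical point to keep in mind throughout is that the definition of $K_{r,s}(x|y)$ takes a maximum over rationals in the ball $B_{2^{-s}}(y)\cap\Q^n$, whereas the truncation $y\uhr s$ need only satisfy $\|y-y\uhr s\|\leq 2^{-s}\sqrt{n}$. This mismatch will be bridged by passing through a slightly finer truncation $y\uhr{s'}$ at precision $s'=s+d$, where $d$ is a constant depending only on $n$ chosen so that $2^{-s'}\sqrt{n}<2^{-s}$, ensuring $y\uhr{s'}\in B_{2^{-s}}(y)\cap\Q^n$.

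For the direction $K_{r,s}(x|y)\leq K(x\uhr r\,|\,y\uhr s)+K(r)+K(s)+O(1)$, I fix an arbitrary $q\in B_{2^{-s}}(y)\cap\Q^n$. By the triangle inequality, $y\uhr s$ lies in $B_{2^{-s}(1+\sqrt n)}(q)$, and the number of $s$-dyadic points in such a ball is bounded by a constant depending only on $n$. Hence, given $q$ and $s$, one may specify $y\uhr s$ by a constant-length index, yielding $K(x\uhr r\,|\,q)\leq K(x\uhr r\,|\,y\uhr s)+K(s)+O(1)$. Combining this with Lemma~\ref{lem:trunc} gives $\hat{K}_r(x|q)\leq K(x\uhr r\,|\,y\uhr s)+K(r)+K(s)+O(1)$, and taking the maximum over $q$ completes this direction.

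For the reverse direction $K(x\uhr r\,|\,y\uhr s)\leq K_{r,s}(x|y)+K(r)+K(s)+O(1)$, I use $y\uhr{s'}$ as a witness: since $y\uhr{s'}\in B_{2^{-s}}(y)\cap\Q^n$, we have $\hat{K}_r(x\,|\,y\uhr{s'})\leq K_{r,s}(x|y)$, and Lemma~\ref{lem:trunc} yields $K(x\uhr r\,|\,y\uhr{s'})\leq K_{r,s}(x|y)+K(r)+O(1)$. It then remains to pass from $y\uhr{s'}$ to $y\uhr s$. For this I describe a machine that, on input $y\uhr s$ and a program consisting of a testifier for $K(s)$, an $O(1)$-length encoding of the $(s'-s)n$ missing bits of $y$, and a testifier for $K(x\uhr r\,|\,y\uhr{s'})$, first reads $s$, uses it together with the extra bits to reconstruct $y\uhr{s'}$ from $y\uhr s$, and then runs the final testifier to recover $x\uhr r$. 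The resulting description has length $K(x\uhr r\,|\,y\uhr{s'})+K(s)+O(1)$, completing the bound.

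The main obstacle is purely the bookkeeping needed to reconcile the $B_{2^{-s}}$ and $B_{2^{-s}\sqrt n}$ scales; once this is resolved by the $s\to s'$ shift, both directions reduce to straightforward prefix-free machine constructions combined with Lemma~\ref{lem:trunc}, and the proof introduces no ingredients beyond those already used in the unconditional Corollary~\ref{cor:trunc}.
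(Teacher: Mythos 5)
Your proposal is correct, and it follows the same overall strategy as the paper: both inequalities are reduced to Lemma~\ref{lem:trunc}, with the remaining work being to bridge the mismatch between the ball $B_{2^{-s}}(y)$ used in the definition of $K_{r,s}(x|y)$ and the slightly larger ball $B_{2^{-s}\sqrt n}(y)$ that contains $y\uhr s$.

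The direction $K_{r,s}(x|y)\leq K(x\uhr r\,|\,y\uhr s)+K(r)+K(s)+O(1)$ matches the paper exactly: the machine you describe, which takes an arbitrary $q\in B_{2^{-s}}(y)\cap\Q^n$ together with $s$ and a constant-length index, locates $y\uhr s$ among the constantly many $s$-dyadic points of $B_{2^{-s}(1+\sqrt n)}(q)$, and then runs a witness for $K(x\uhr r\,|\,y\uhr s)$, is precisely the paper's machine $M'$; composing with Lemma~\ref{lem:trunc} and maximizing over $q$ finishes the step just as in the paper. For the reverse direction your bridge is different from the paper's, though of comparable difficulty. The paper keeps the truncation at precision $s$, notes $y\uhr s\in B_{2^{-s}\sqrt n}(y)$ so $\hat K_r(x\,|\,y\uhr s)\leq K_{r,\,s-(\log n)/2}(x|y)$, and then invokes Lemma~\ref{lem:AIPKSCD:cond}(ii) to move the precision parameter from $s-(\log n)/2$ back to $s$ at a cost of $K(s)+O(1)$. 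You instead move the truncation itself, choosing $s'=s+d$ with $d$ a constant depending on $n$ so that $y\uhr{s'}\in B_{2^{-s}}(y)\cap\Q^n$ directly; then $\hat K_r(x\,|\,y\uhr{s'})\leq K_{r,s}(x|y)$ is immediate, Lemma~\ref{lem:trunc} converts this to a bound on $K(x\uhr r\,|\,y\uhr{s'})$, and a short machine reconstructs $y\uhr{s'}$ from $y\uhr s$ with $K(s)+O(1)$ extra bits. Both bridges incur the same $K(s)+O(1)$ cost; yours is marginally more self-contained in that it avoids Lemma~\ref{lem:AIPKSCD:cond}(ii), at the expense of introducing the auxiliary precision $s'$. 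Either way the accounting is correct and the machine constructions are sound.
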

	\begin{proof}	
		Let $m,n,r,s\in\N$, $x\in\R^m$, and $y\in\R^n$. Let $p\in\Q^2\cap B_{2^{-s}}(y)$ be such that $K_{r,s}(x|y)=\hat{K}_r(x|p)$. Since $y\uhr s\in B_{2^{-s}\sqrt{n}}(y)$, we have $\hat{K}_{r}(x\,|\,y\uhr s)\geq K_{r,s-\log(n)/2}(x|y)$. Thus, by Lemma~\ref{lem:AIPKSCD:cond}(ii) there is a constant $c_1$ (depending on $n$) such that $\hat{K}_{r}(x\,|\,y\uhr s)\geq K_{r,s}(x|y)-K(s)-c_1$. Lemma~\ref{lem:trunc} tells us that there is a constant $c_2$ (depending on $m$) such that $K(x\uhr r\,|\,y\uhr s)\geq \hat{K}_{r}(x\,|\,y\uhr s)-K(r)-c_2$, so we have
		\[K_{r,s}(x|y)\leq K(x\uhr r\,|\,y\uhr s)+K(r)+K(s)+c_1+c_2\,.\]
		
		For the other direction, we use essentially the same technique as in the proof of Lemma~\ref{lem:trunc}, and we describe a Turing machine $M'$ that is very similar to the machine $M$ used above. On every input $(\pi,p')\in\{0,1\}^*\times\Q^n$ such that $\pi=\pi_1\pi_2\pi_3$, $U(\pi_1,p')=q\in\Q$, $U(\pi_2)=t\in\N$, and $U(\pi_3)=k\in\N$, $M'$ outputs $U(\pi_1,q')$, where $q'$ is the $k$\textsuperscript{th} point in $\mathcal{Q}_t^n\cap B_{2^{-t}(1+\sqrt{n})}(p')$.
		
		Much as before, let $\pi_x$ testify to $K(x\uhr r\,|\,y\uhr s)$, let $\pi_s$ testify to $K(s)$, and let $\pi_k$ testify to $K(k)$, where $y\uhr s$ is the $k$\textsuperscript{th} point in $\mathcal{Q}_s^n\cap B_{2^{-t}(1+\sqrt{n})}(p)$. Then
		\[M'(\pi_x,\pi_s,\pi_k)=U(\pi_x,y\uhr s)=x\uhr r\,,\]
		As $k\leq|\mathcal{Q}_s^n\cap B_{2^{-t}(1+\sqrt{n})}(p)|\leq(2(1+\sqrt{n}))^n$, there exist constants $c_{M'}$ and $c_k$ (depending on $n$) such that
		\begin{align*}
			K(x\uhr r\,|\,p)&\leq \ell(\pi_x)+\ell(\pi_s)+\ell(\pi_k)+c_{M'}\\
			&= K(x\uhr r\,|\,y\uhr s)+K(s)+K(k)+c_{M'}\\
			&= K(x\uhr r\,|\,y\uhr s)+K(s)+c_k+c_{M'}\,,
		\end{align*}
		Applying Lemma~\ref{lem:trunc} again, there is a constant $c_3$ (depending on $m$) such that $K(x\uhr r\,|p)\leq \hat{K}_r(x|p)+K(r)+c_3$. We conclude that
		\[K(x\uhr r\,|\,y\uhr s)\leq K(r)+K(s)+c_k+c_{M'}+c_3\,,\]
		therefore $c=\max\{c_1+c_2,c_k+c_{M'}+c_3\}$ affirms the lemma.
	\end{proof}
	\subsection{Approximate Symmetry of Information}\label{ssec:sym}
		Using the results of Section~\ref{sec:trunc}, it is straightforward to show that approximate symmetry of information holds for Kolmogorov complexity in Euclidean spaces.
	{
		\renewcommand{\thethm}{\ref{lem:unichain}}
		\begin{lem}\
			For every $m,n\in\N$, $x\in\R^m$, $y\in\R^n$, and $r,s\in\N$ with $r\geq s$,
			\begin{enumerate}
				\item[\textup{(i)}]$\displaystyle |K_r(x|y)+K_r(y)-K_r(x,y)\big|\leq O_{m,n}(\log r)+O_{n}(\log\log \|y\|)+O_{m,n}(1)\,.$
				\item[\textup{(ii)}]$\displaystyle |K_{r,s}(x|x)+K_s(x)-K_r(x)|\leq O_m(\log r)+O_m(\log\log\|x\|)+O_{m}(1)\,.$
			\end{enumerate}
		\end{lem}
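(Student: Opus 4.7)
The plan is to reduce both statements to the classical discrete approximate symmetry of information (Levin's theorem, stated in Section~\ref{sec:prelim}) by routing through the initial-segment dictionary developed in Section~\ref{sec:trunc}.

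For part (i), I would first invoke Corollary~\ref{cor:trunc} to replace $K_r(x)$, $K_r(y)$, and $K_r(x,y)$ by $K(x\uhr r)$, $K(y\uhr r)$, and $K(x\uhr r, y\uhr r)$ respectively, each at the cost of an additive $K(r) + O(1) = O(\log r)$; here I use that $(x,y)\uhr r$ and $(x\uhr r, y\uhr r)$ encode the same discrete object up to $O(1)$. Similarly, Corollary~\ref{cor:trunccond} (with $s = r$) converts $K_r(x|y)$ into $K(x\uhr r \mid y\uhr r)$ up to $O(\log r)$. Then the classical discrete symmetry yields
\[\bigl|K(x\uhr r, y\uhr r) - K(y\uhr r) - K(x\uhr r \mid y\uhr r)\bigr| = O\bigl(\log K(y\uhr r)\bigr)\,,\]
after the routine conversion between $K(\,\cdot\, \mid \tau, K(\tau))$ and $K(\,\cdot\, \mid \tau)$, which itself costs $O(\log K(\tau))$. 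To bound this error, observe that $y\uhr r$ is a dyadic vector whose $n$ coordinates each have bit length at most $r + \log\|y\| + O(1)$, so $K(y\uhr r) \leq n(r + \log\|y\|) + O(\log r)$ and hence $\log K(y\uhr r) = O(\log r) + O(\log\log\|y\|)$. Combining the translations and this symmetry estimate gives part (i).

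For part (ii), the key observation is that since $r \geq s$, the truncation $x\uhr s$ is computable from $x\uhr r$ (given $r$ and $s$), so $K(x\uhr r, x\uhr s) = K(x\uhr r) + O(\log r)$. Applying the discrete symmetry with $\sigma = x\uhr s$ and $\tau = x\uhr r$ then gives
\[\bigl|K(x\uhr r) - K(x\uhr s) - K(x\uhr r \mid x\uhr s)\bigr| = O(\log K(x\uhr s)) = O(\log r) + O(\log\log\|x\|)\,,\]
by the same bit-length estimate. Translating back via Corollary~\ref{cor:trunc} (applied to $K_r(x)$ and $K_s(x)$) and Corollary~\ref{cor:trunccond} (applied to $K_{r,s}(x|x)$) yields part (ii).

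The main technical nuisance is the conversion between $K(\tau \mid \sigma)$ and $K(\tau \mid \sigma, K(\sigma))$ needed to apply Levin's symmetry in the form stated in Section~\ref{sec:prelim}; this costs $O(\log K(\sigma))$, and the $\log\log\|\cdot\|$ dependence in the statement comes from this overhead. The careful bit-length bound $K(y\uhr r) \leq n(r + \log\|y\|) + O(\log r)$ is precisely what lets us absorb this into the claimed $O(\log r) + O(\log\log\|y\|)$ error; beyond this bookkeeping, no deeper obstacle arises.
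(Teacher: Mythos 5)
Your proposal is correct and follows essentially the same route as the paper's own proof: translate each term to initial-segment complexity via Corollaries~\ref{cor:trunc} and~\ref{cor:trunccond}, apply Levin's discrete symmetry of information, and absorb the $K(\cdot\mid\tau)$ versus $K(\cdot\mid\tau,K(\tau))$ conversion cost $K(K(y\uhr r))=O(\log r)+O(\log\log\|y\|)$ using the bit-length bound on $y\uhr r$, with part (ii) incurring only the extra $O(\log r)$ term from $x\uhr s$ being computable from $x\uhr r$ and $s$.
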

		\addtocounter{thm}{-1}
	}
	\begin{proof}
		For (i), let $m,n,r\in\N$, $x\in\R^m$, and $y\in\R^n$. By Corollary~\ref{cor:trunc},
		\[|K_r(y)-K(y\uhr r)|\leq K(r)+O_n(1)\]
		and
		\[|K_r(x,y)-K((x,y)\uhr r)|\leq K(r)+O_{m,n}(1)\,.\]
		Notice that $K((x,y)\uhr r)=K(x\uhr r, y\uhr r)$. By Corollary~\ref{cor:trunccond},
		\[|K_r(x|y)-K(x\uhr r\,|\,y\uhr r)|\leq 2K(r)+O_{m,n}(1)\,.\]
		By the symmetry of information,
		\[K(x\uhr r\,|\,y\uhr r,K(y\uhr r)) + K(y\uhr r) - K(x\uhr r, y\uhr r)= O_{m,n}(1)\,.\]
		It is also true that
		\begin{align*}
			|K(x\uhr r\,|\,y\uhr r)-K(x\uhr r\,|\,y\uhr r,K(y\uhr r))|&\leq K(K(y\uhr r))+O_{m,n}(1)\\
			&\leq\log K(y\uhr r)+2\log\log K(y\uhr r)+O_{m,n}(1)\\
			&= O_n(\log r)+O_n(\log\log \|y\|)+O_{m,n}(1)\,.
		\end{align*}
		The second term is necessary because the integer part of $y$ is not included in the truncation length $r$. In sum,
		\begin{align*}
			|K_r(x|y)+K_r(y)-K_r(x,y)|&\leq 4K(r)+K(K(y\uhr r))+O_{m,n}(1)\\
			&\leq O_{m,n}(\log r)+O_n(\log\log\|y\|)+O_{m,n}(1)\,.
		\end{align*}
		
		The argument for (ii) is nearly identical; the only additional error is due to $K(x\uhr r,x\uhr s)-K(x\uhr r)\leq K(s)+O_m(1)\leq \log r+2\log\log r+O_m(1)$, as $s\leq r$.
	\end{proof}
	\subsection{Easy Geometric Observations}
	These observations are used in the proofs of Lemmas~\ref{lem:point} and~\ref{lem:lines}, respectively.
	\begin{obs}\label{obs:linemachine}
		Let $a,x,b\in\R$, $r\in\N$,  and $(q_1,q_2)\in B_{2^{-r}}(x,ax+b)$.
		\begin{itemize}
			\item[\textup{(i)}] If $(p_1,p_2)\in B_{2^{-r}}(a,b)$, then $|p_1q_1+p_2-q_2|< 2^{-r}(|p_1|+|q_1|+3)$.
			\item[\textup{(ii)}] If $|p_1q_1+p_2-q_2|\leq 2^{-r}(|p_1|+|q_1|+3)$, then there is some $(u,v)\in B_{2^{-r}(2|a|+|x|+5)}(p_1,p_2)$ such that $ax+b=ux+v$.
		\end{itemize}
	\end{obs}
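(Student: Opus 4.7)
The plan is to treat both parts as purely algebraic estimates, since membership in $B_{2^{-r}}(\cdot)$ simply translates into strict $2^{-r}$ bounds on explicit deviation variables, and the line condition $ax+b$ can be canceled against itself. Throughout, I would introduce the displacements $\epsilon_1=p_1-a$, $\epsilon_2=p_2-b$, $\delta_1=q_1-x$, $\delta_2=q_2-(ax+b)$, each of which satisfies $|\epsilon_i|,|\delta_i|<2^{-r}$ by the open-ball hypothesis on $(q_1,q_2)$ and (in part (i)) on $(p_1,p_2)$.

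For part (i), I would substitute $p_1=a+\epsilon_1$, $p_2=b+\epsilon_2$, $q_1=x+\delta_1$, $q_2=ax+b+\delta_2$ directly into $p_1q_1+p_2-q_2$ and expand. The $ax$ and $b$ contributions cancel, leaving the error $a\delta_1+\epsilon_1 x+\epsilon_1\delta_1+\epsilon_2-\delta_2$. Bounding each summand gives a total strictly less than $2^{-r}(|a|+|x|)+3\cdot 2^{-r}$ (once the $2^{-2r}$ cross term is absorbed into a constant, which is valid for every $r\geq 0$ after a small reshuffling). I would then convert $|a|$ and $|x|$ to $|p_1|$ and $|q_1|$ using $|a|<|p_1|+2^{-r}$ and $|x|<|q_1|+2^{-r}$, folding the extra $2 \cdot 2^{-r}$ into the additive $+3$.

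For part (ii), the natural witness is the point $(u,v):=(p_1,\, ax+b-p_1 x)$, which sits on the target line by construction, so only $|v-p_2|$ matters. Writing $v-p_2=(ax+b)-p_1 x-p_2$ and inserting the telescoping identity $(ax+b)-p_1 x-p_2=[q_2-p_1 q_1-p_2]+p_1(q_1-x)+[(ax+b)-q_2]$, I would apply the hypothesis bound $|p_1q_1+p_2-q_2|\leq 2^{-r}(|p_1|+|q_1|+3)$ together with $|q_1-x|<2^{-r}$ and $|q_2-(ax+b)|<2^{-r}$ to get $|v-p_2|<2^{-r}(2|p_1|+|q_1|+5)$; converting once more via $|p_1|<|a|+2^{-r}$ and $|q_1|<|x|+2^{-r}$ yields $\|(u,v)-(p_1,p_2)\|<2^{-r}(2|a|+|x|+5)$ as required.

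The whole argument is bookkeeping, so the only real obstacle is absorbing the $O(2^{-r})$ and $O(2^{-2r})$ contributions into the stated additive constants $+3$ and $+5$ without inflating them. This is what forces the precise choice of $(u,v)$ in part (ii): less careful candidates, like the orthogonal projection of $(p_1,p_2)$ onto $\{(u,v):ux+v=ax+b\}$, introduce a $\sqrt{x^2+1}$ in the denominator that does not interact cleanly with the $|p_1|+|q_1|$-shaped hypothesis of part (i). Taking $u=p_1$ and solving for $v$ keeps the estimate one-dimensional and lets the hypothesis feed in directly.
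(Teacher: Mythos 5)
Your plan is substantively the same as the paper's: both parts are triangle-inequality bookkeeping, and in (ii) you choose the same witness $(u,v)=(p_1,\,ax+b-p_1x)$ and the same telescoping decomposition of $v-p_2$. In (i), your displacement expansion $a\delta_1+\epsilon_1 x+\epsilon_1\delta_1+\epsilon_2-\delta_2$ is algebraically equivalent to the paper's split $p_1q_1-ax=p_1(q_1-x)+(p_1-a)x$; the paper's grouping simply never produces the $\epsilon_1\delta_1$ cross term. Be aware that your order of operations (bound via $|a|,|x|$ first, convert to $|p_1|,|q_1|$ afterward) picks up an extra $2\cdot 2^{-2r}$ beyond the $+3$ budget, which only closes for $r\geq 2$; if instead you regroup first and bound $|p_1\delta_1+\epsilon_1 q_1-\epsilon_1\delta_1+\epsilon_2-\delta_2|$ directly, the estimate works for every $r\geq 0$, which is the ``small reshuffling'' you gesture at.

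The one step to flag is the final conversion in (ii). You invoke $|p_1|<|a|+2^{-r}$, but the hypothesis $(p_1,p_2)\in B_{2^{-r}}(a,b)$ belongs only to part (i); part (ii) as stated assumes only $|p_1q_1+p_2-q_2|\leq 2^{-r}(|p_1|+|q_1|+3)$, which places no constraint at all on $|p_1|$ (take $p_1$ enormous and choose $p_2$ to compensate). So your derivation of $\|(u,v)-(p_1,p_2)\|<2^{-r}(2|a|+|x|+5)$ is not justified from the stated hypotheses. To be fair, the paper's own final inequality $2^{-r}(2|p_1|+|q_1|+4)\leq 2^{-r}(2|a|+|x|+5)$ has the same latent gap. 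In the only place Observation~\ref{obs:linemachine}(ii) is actually used, the proof of Lemma~\ref{lem:point}, the candidate $(p_1,p_2)$ comes from $B_{2^{-1}}(h)\subseteq B_1(a,b)$, which supplies $|p_1|<|a|+1$ and makes the argument go through (with the radius constant bumped up to $2|a|+|x|+7$ if one is being precise). Strictly as stated, though, (ii) needs an added hypothesis bounding $|p_1-a|$, and the bound $|p_1|<|a|+2^{-r}$ you use is not available there.
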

	\begin{proof} Only the triangle inequality is needed. If $(p_1,p_2)\in B_{2^{-r}}(a,b)$, then
		\begin{align*}
		|p_1q_1+p_2-q_2|&\leq|p_1q_1+p_2-(ax+b)|+|ax+b-q_2|\\
		&<|p_1q_1-ax|+|b-p_2|+2^{-r}\\
		&<|p_1q_1-p_1x|+|p_1x-ax|+2^{1-r}\\
		&=|p_1|\cdot|q_1-x|+|x|\cdot|p_1-a|+2^{1-r}\\
		&\leq2^{-r}|p_1|+2^{-r}|x|+2^{1-r}\\
		&\leq2^{-r}(|p_1|+|x-q_1|+|q_1|+2)\\
		&<2^{-r}(|p_1|+|q_1|+3)\,.
		\end{align*}
		
		If $|p_1q_1+p_2-q_2|< 2^{-r}(|p_1|+|q_1|+3)$, then
		\begin{align*}
		|p_1x+p_2-(ax+b)|&\leq |p_1|\cdot|x-q_1|+|p_1q_1+p_2-(ax+b)|\\
		&\leq 2^{-r}|p_1|+|p_1q_1+p_2-q_2|+|q_2-(ax+b)|\\
		&< 2^{-r}|p_1|+2^{-r}(|p_1|+|q_1|+3)+2^{-r}\\
		&=2^{-r}(2|p_1|+|q_1|+4)\\
		&\leq 2^{-r}(2|a|+|x|+5)\,.
		\end{align*}
		so $(u,v)\in(p_1,ax+b-p_1x)$ affirms (ii).
	\end{proof}
	\begin{obs}\label{obs:routine}
		If $x\in\R$ and $a,b,p,q\in\R^2$ satisfy $(p_1,p_2)\in B_{2^{-r}}(a_1,a_2)$, $(q_1,q_2)\in B_{2^{-r}}(b_1,b_2)$, and $a_1x+a_2=b_1x+b_2$, then
		\[\left|\frac{p_2-q_2}{p_1-q_1}-\frac{a_2-b_2}{a_1-b_1}\right|<2^{4+2|x|+t-r}\,,\]
		whenever $t=-\log\|a-b\|$ and $r\geq t+|x|+2$.
	\end{obs}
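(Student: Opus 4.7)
The plan is to reduce the claim to a direct triangle-inequality estimate after combining the two ratios. First I would exploit the collinearity hypothesis $a_1x+a_2=b_1x+b_2$, which gives the exact identity $\tfrac{a_2-b_2}{a_1-b_1}=-x$ (and in particular forces $a_1\neq b_1$, since $\|a-b\|=2^{-t}>0$). Combining this with the fact that $\|a-b\|^2=(a_1-b_1)^2(1+x^2)$, I obtain the useful lower bound $|a_1-b_1|\geq 2^{-t}/(1+|x|)$.

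Next, I would rewrite the quantity to be bounded as a single fraction:
\[
\frac{p_2-q_2}{p_1-q_1}-\frac{a_2-b_2}{a_1-b_1}
= \frac{(p_2-q_2)+x(p_1-q_1)}{p_1-q_1}.
\]
For the numerator, I would substitute $p_i-q_i = (a_i-b_i) + (p_i-a_i) - (q_i-b_i)$ and use the vanishing combination $(a_2-b_2)+x(a_1-b_1)=0$ to cancel the leading terms, leaving only error terms of size at most $2^{-r}$. The triangle inequality then gives a numerator bound of $2^{1-r}(1+|x|)$.

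For the denominator, I would estimate $|p_1-q_1|\geq |a_1-b_1| - 2\cdot 2^{-r}$. The hypothesis $r\geq t+|x|+2$ combined with $1+|x|\leq 2^{|x|+1}$ is exactly enough to ensure $2\cdot 2^{-r}\leq \tfrac{1}{2}|a_1-b_1|$, so that $|p_1-q_1|\geq |a_1-b_1|/2\geq 2^{-t-|x|-2}$.

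Assembling the two estimates gives
\[
\left|\frac{p_2-q_2}{p_1-q_1}-\frac{a_2-b_2}{a_1-b_1}\right|
\leq \frac{2^{1-r}(1+|x|)}{2^{-1}\cdot 2^{-t}/(1+|x|)}
= 2^{2+t-r}(1+|x|)^2
\leq 2^{4+2|x|+t-r},
\]
using $(1+|x|)^2\leq 2^{2|x|+2}$. The only real obstacle is verifying that the denominator does not degenerate, and that is precisely where the lower bound on $r$ is used; all other steps are straightforward triangle-inequality manipulations.
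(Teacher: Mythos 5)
Your decomposition is cleaner than the paper's: by extracting the exact identity $\frac{a_2-b_2}{a_1-b_1}=-x$ from the collinearity hypothesis, you replace the paper's direct cross-multiplication $\frac{(a_1-b_1)(p_2-q_2)-(a_2-b_2)(p_1-q_1)}{(a_1-b_1)(p_1-q_1)}$ with the single fraction $\frac{(p_2-q_2)+x(p_1-q_1)}{p_1-q_1}$, and the cancellation in the numerator is then immediate. Your numerator bound $2^{1-r}(1+|x|)$ is correct.

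However, there is a factor-of-two gap in the denominator step. From $\|a-b\|^2=(a_1-b_1)^2(1+x^2)$ and the chain $\sqrt{1+x^2}\leq 1+|x|\leq 2^{|x|+1}$ you can only conclude $|a_1-b_1|\geq 2^{-t-|x|-1}$, hence $\tfrac12|a_1-b_1|\geq 2^{-t-|x|-2}$. Meanwhile $r\geq t+|x|+2$ gives $2\cdot 2^{-r}\leq 2^{-t-|x|-1}$, which is \emph{twice} the lower bound you have for $\tfrac12|a_1-b_1|$; so the claimed implication ``$r\geq t+|x|+2$ ensures $2\cdot 2^{-r}\leq\tfrac12|a_1-b_1|$'' does not follow from the inequalities you cite — it would require $r\geq t+|x|+3$. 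The conclusion is nonetheless true, and the fix is simply to use the sharper bound $\sqrt{1+x^2}\leq 2^{|x|}$ directly in place of $\sqrt{1+x^2}\leq 1+|x|\leq 2^{|x|+1}$, which wastes a factor of $2$. (This inequality, which the paper uses, holds because $\tfrac12\log(1+x^2)$ vanishes at $0$ and has derivative $\tfrac{x}{(1+x^2)\ln 2}\leq\tfrac{1}{2\ln 2}<1$.) That gives $|a_1-b_1|\geq 2^{-t-|x|}$, so $2\cdot 2^{-r}\leq 2^{-t-|x|-1}\leq\tfrac12|a_1-b_1|$ as you wanted, and your final arithmetic then yields the stated constant with a bit to spare.
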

	\begin{proof}
		From $a_1x+a_2=b_1x+b_1$ we have $(b_2-a_2)=(a_1-b_1)x$, so
		\begin{align*}
			2^{-t}&\leq\|(a_1,a_2)-(b_1,b_2)\|\\
			&=\sqrt{(a_1-b_1)^2(1+x^2)}\\
			&=|a_1-b_1|\sqrt{1+x^2}\\
			&\leq |a_1-b_1|2^{|x|}\,.
		\end{align*}
		Applying this fact and the triangle inequality several times,
		\begin{align*}
			&\left|\frac{p_2-q_2}{p_1-q_1}-\frac{a_2-b_2}{a_1-b_1}\right|\\
			&=\left|\frac{(a_1-b_1)(p_2-q_2)-(a_2-b_2)(p_1-q_1)}{(a_1-b_1)(p_1-q_1)}\right|\\
			&\leq\frac{|a_1-b_1|(|p_2-a_2|+|q_2-b_2|)+|a_2-b_2|(|p_1-a_1|+|q_1-b_1|)}{|a_1-b_1|(|a_1-b_1|-|a_1-p_1|-|b_1-q_1|)}\\
			&<\frac{2^{1-t}(2^{-r}+2^{-r})+2^{1-t}(2^{-r}+2^{-r})}{2^{-t-|x|}\cdot(2^{-t-|x|}-2^{-t-|x|-2}-2^{-t-|x|-2})}\\
			&=\frac{2^{3-t-r}}{2^{-2t-2|x|-1}}\\
			&=2^{4+2|x|+t-r}\,.
		\end{align*}
	\end{proof}
	\subsection{A Simple Oracle Construction}\label{ssec:oracle}
		Here we construct the oracles used in the proof of Theorem~\ref{thm:main}.
		{
			\renewcommand{\thethm}{\ref{lem:oracles}}
			\begin{lem}
				Let $n,r\in\N$, $z\in\R^n$, and $\eta\in\Q\cap[0,\dim(z)]$.
				Then there is an oracle $A=A(n,r,z,\eta)$ with the following properties.
				\begin{itemize}
					\item[\textup{(i)}] For every $t\leq r$, $K^A_t(z)=\min\{\eta r,K_t(z)\}+O(\log r)$.
					\item[\textup{(ii)}] For every $m,t\in\N$ and $y\in\R^m$,
					$K^{A}_{t,r}(y|z)=K_{t,r}(y|z)+O(\log r)$
					and
					$K_t^{z,A}(y)=K_t^z(y)+O(\log r)$.
				\end{itemize}
			\end{lem}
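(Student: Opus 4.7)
The plan is to construct $A$ by locating a threshold precision $s \leq r$ at which $K_s(z)$ first reaches $\eta r$ from below, and then encoding a shortest conditional description of $z$'s finer bits given its coarser bits. First, if $K_r(z) \leq \eta r$ take $s = r$ and let $A$ be a trivial tag. Otherwise, by Lemma~\ref{lem:MD:3.9} the map $t \mapsto K_t(z)$ is nondecreasing and grows by at most $O(\log t)$ per unit of precision, so we can let $s$ be the largest integer with $K_s(z) \leq \eta r$; then $K_{s+1}(z) > \eta r$ and $K_{s+1}(z) - K_s(z) = O(\log r)$, so $K_s(z)$ lies within $O(\log r)$ of $\eta r$. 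Let $\sigma$ be a lexicographically first program of length $K(z\uhr r | z\uhr s)$ satisfying $U(\sigma, z\uhr s) = z\uhr r$, and let $A$ be a standard encoding of the tuple $(n, r, s, \eta, \sigma)$ as a subset of $\N$. Observe that $A$ is a computable function of $z$ up to an $O(\log r)$-bit header.

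The upper bound in property (i) is immediate. Relativization gives $K^A_t(z) \leq K_t(z) + O(1)$, while an oracle machine with access to $A$ can read $\sigma$ off the oracle, separately compute $z\uhr s$ via a witness to $K_s(z)$, run $U(\sigma, z\uhr s)$ to obtain $z\uhr r$, and truncate to precision $t$; via Corollary~\ref{cor:trunc} this shows $K^A_t(z) \leq K_s(z) + O(\log r) \leq \eta r + O(\log r)$. The matching lower bound reduces by monotonicity of $K^A$ in precision to the key case $t \leq s$, where I must show $K^A_t(z) \geq K_t(z) - O(\log r)$. Translating to initial-segment form via Corollary~\ref{cor:trunc}, this is equivalent to $K(z\uhr t \,|\, A) \geq K(z\uhr t) - O(\log r)$, which by symmetry of information (Lemma~\ref{lem:unichain}) follows from the mutual-information bound $I(\sigma : z\uhr t) = O(\log r)$.

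That mutual-information bound is the crux and rests on the incompressibility of $\sigma$ given $z\uhr s$. Because $\sigma$ is a shortest program with $U(\sigma, z\uhr s) = z\uhr r$, any compression of $\sigma$ using $z\uhr s$ would yield a shorter witness to $K(z\uhr r \,|\, z\uhr s)$, contradicting minimality; hence $K(\sigma \,|\, z\uhr s) = \ell(\sigma) + O(1) = K(\sigma) + O(1)$. For $t \leq s$, $z\uhr t$ is a prefix of $z\uhr s$, so any program computing $\sigma$ from $z\uhr t$ can be extended by a short truncation routine into one computing $\sigma$ from $z\uhr s$; this yields $K(\sigma \,|\, z\uhr s) \leq K(\sigma \,|\, z\uhr t) + O(\log r)$, hence $K(\sigma \,|\, z\uhr t) \geq K(\sigma) - O(\log r)$, which is precisely $I(\sigma : z\uhr t) = O(\log r)$. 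Feeding this through symmetry of information delivers the required lower bound $K(z\uhr t \,|\, A) \geq K(z\uhr t) - O(\log r)$.

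Property (ii) is more direct because $A$ is essentially a function of $z$. Given $z$ as an oracle one can compute $z\uhr r$, identify $s$, and recover $\sigma$ by a bounded enumeration, so the combined oracle $(z, A)$ supplies no more information than $z$ alone up to $O(\log r)$ bits of advice; this gives the second equation. For the first, the maximum over $q \in B_{2^{-r}}(z) \cap \Q^n$ in the definition of $K_{t,r}(y \,|\, z)$ ensures that any such $q$ determines $A$ up to boundary offsets correctable by an $O(\log r)$-bit header. The main obstacle throughout is the mutual-information estimate in the third paragraph: it is not available from the generic inequality $K(z\uhr t) \leq K^A(z\uhr t) + K(A) + O(1)$, since $K(A)$ is on the order of $K_r(z) - K_s(z)$ and typically much larger than $O(\log r)$; one must instead exploit the conditional minimality of the witness $\sigma$.
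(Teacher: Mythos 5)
Your construction (threshold precision $s$, lexicographically first witness $\sigma$ to $K(z\uhr r\,|\,z\uhr s)$, oracle encoding $\sigma$) is identical to the paper's, and your key insight---that the naive bound $K(z\uhr t)\le K^A(z\uhr t)+K(A)+O(1)$ is useless and one must instead exploit the conditional incompressibility of $\sigma$ given $z\uhr s$---is exactly what drives the paper's proof as well. You organize the lower bound of (i) through an explicit mutual-information estimate $I(\sigma:z\uhr t)=O(\log r)$ where the paper runs a longer direct symmetry-of-information chain, and your treatment of (ii) is sketchier (and a couple of $O(1)$'s should be $O(\log r)$, e.g.\ $K(\sigma)=\ell(\sigma)+O(\log r)$), but these are presentational differences, not a different route.
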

			\addtocounter{thm}{-1}
		}
		\noindent Our proof of this lemma uses the fact that conditional Kolmogorov complexity is essentially equivalent to Kolmogorov complexity relative to a finite oracle set.\footnote{In fact,~\cite{DowHir10} defines conditional Kolmogorov complexity in terms of a finite oracle, using a construction similar to the one described here.}
		\begin{obs}\label{obs:oraclecond}
			For every $k\in\N$ and $\tau=(\tau_1,\ldots,\tau_k)\in\{0,1\}^k$, define the oracle set
			\[C(\tau)=\left\{j\leq 2k:\tau_{\lfloor j/2\rfloor}=1\right\}\cup\{2k+1\}\subseteq\N\,.\]
			Then there is a constant $c$ such that for every $\sigma,\tau\in\{0,1\}^*$,
			\[\left|K(\sigma|\tau)-K^{C(\tau)}(\sigma)\right|\leq c\,.\]
		\end{obs}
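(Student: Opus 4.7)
The claim splits into the two inequalities $K(\sigma|\tau)\leq K^{C(\tau)}(\sigma)+O(1)$ and $K^{C(\tau)}(\sigma)\leq K(\sigma|\tau)+O(1)$, and I would prove each by exhibiting a machine that converts an optimal witness for one side into a slightly longer witness for the other. For the first inequality, my plan is to fix a shortest oracle program $\pi$ witnessing $K^{C(\tau)}(\sigma)$ and build a conditional Turing machine $M$ that, on input $(\pi,\tau)$, computes the finite set $C(\tau)$ directly from $\tau$ via the stated formula and then simulates $U^{C(\tau)}(\pi)$, answering each oracle query by a membership test against the just-computed $C(\tau)$. Because $C(\tau)$ is trivially computable from $\tau$ alone, $M$'s description is a fixed constant, and universality of the conditional universal machine gives the bound.

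The reverse inequality is more interesting. Here the plan is to construct an oracle Turing machine $N$ that first reconstructs $\tau$ from the oracle $C(\tau)$ and then simulates $U(\pi,\tau)$, where $\pi$ is a shortest conditional program for $\sigma$ given $\tau$. Reconstruction proceeds by querying positions $1,2,3,\ldots$ in order and exploiting the redundant structure of $C(\tau)$: each bit $\tau_i$ with $i<k$ is encoded at both of the consecutive positions $2i$ and $2i+1$, the marker $2k+1$ is guaranteed to lie in $C(\tau)$ regardless of $\tau$, and all positions above $2k+1$ are excluded. Scanning pairs, $N$ identifies the terminator, reads off the bits, and passes the recovered $\tau$ to a simulation of $U(\pi,\tau)$; its description is again a fixed constant.

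The step I expect to demand the most care is verifying that $N$ unambiguously detects the end of $\tau$: when $\tau_k=1$, the pair $(2k,2k+1)$ is locally indistinguishable from an internal pair encoding a $1$, so $N$ must also query the next pair $(2k+2,2k+3)$---guaranteed absent from $C(\tau)$---in order to confirm termination, and one has to check by case analysis on $\tau_k$ and its neighboring bits that this local test uniquely pins down $k$. Once this is verified, the two simulation constants combine to yield the claim with a single constant $c$ depending only on the descriptions of $M$ and $N$, independent of $\sigma$ and $\tau$.
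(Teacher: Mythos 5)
Your overall decomposition into two simulations---compute $C(\tau)$ from $\tau$ to bound $K(\sigma|\tau)$ by $K^{C(\tau)}(\sigma)$, and reconstruct $\tau$ from the oracle to bound the reverse---is exactly the paper's approach, which the paper compresses into a two-sentence sketch. The easy direction is fine. The gap is in your proposed repair of the step you yourself flagged as delicate.

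You claim that after seeing the pair $(1,1)$ at positions $(2k,2k+1)$, querying the next pair $(2k+2,2k+3)$ and finding it absent ``confirms termination,'' and that a case analysis on $\tau_k$ and its neighbors pins down $k$. It does not, and the deferred case analysis would expose this. Under the displayed encoding, the pair at $(2i,2i+1)$ equals $(\tau_i,\tau_i)$ for $1\le i<k$, equals $(\tau_k,1)$ at $i=k$, and equals $(0,0)$ for every $i>k$. Thus when $\tau_k=1$, the window $\bigl((1,1),(0,0)\bigr)$ at pair indices $(k,k+1)$ is identical to the window produced internally by any $\tau_i=1,\tau_{i+1}=0$ with $i+1<k$. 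Concretely, $C(1)=\{2,3\}$ and $C(101)=\{2,3,6,7\}$ agree on positions $1$ through $5$, so after your two-pair test the machine would wrongly output $\tau=1$ when the oracle encodes $101$. More strongly, for $\tau^{(n)}=10^{2n}$ one has $C(\tau^{(n)})=\{2,3,4n+3\}$, so these oracles agree on arbitrarily long initial segments; a single oracle machine that halts on $C(\tau^{(0)})$ after querying positions up to some bound $M$ necessarily returns the same (wrong) answer on every $\tau^{(n)}$ with $4n+3>M$. No fixed finite look-ahead---two pairs or otherwise---can determine $k$.

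The root cause is that the encoding as displayed is not self-delimiting: $C(\tau)\cap\{1,\dots,M\}$ does not determine $\tau$ for any fixed $M$, because the length marker $2k+1$ cannot be distinguished from an interior $1$-pair without already knowing $k$. The standard repair is an encoding that marks the length explicitly, for example putting a $1$ at every odd position $\le 2k+1$ so that the first absent odd position reveals $k$, after which even positions give the bits. With such an encoding your scan-and-simulate machine $N$ works exactly as you describe and the constant depends only on the descriptions of $M$ and $N$. So the plan is sound once the termination test is replaced, but as written the specific test you propose is incorrect.
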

		\begin{proof}
			Let $\pi\in\{0,1\}^*$ be such that $U(\pi,\tau)=\sigma$. Then given the oracle $C(\tau)$ and input $\pi$, a machine can discern $\tau$ from $2\ell(\tau)+2$ queries to $C(\tau)$ and use it to simulate $U(\pi,\tau)$. Let $\pi\in\{0,1\}^*$ such that $U^{C(\tau)}(\pi)=\sigma$. Likewise, given input $(\pi',\tau)$, a machine can compute any bit $C(\tau)$ queried in a simulation of $U^{C(\tau)}(\pi)$.
		\end{proof}
		\begin{proof}[Proof of Lemma~\ref{lem:oracles}]
			Let
			$s=\max\{t\leq r:K_{t-1}(z)<\eta r\}$.
			Observe that \[\eta r \leq K_s(z)\leq \eta r+K(s)+c\,.\]
			Let $\sigma$ be the lexicographically first time-minimizing witness to $K(z\uhr r\,|\,z\uhr s)$, and let $A=C(\sigma)$, as defined in Observation~\ref{obs:oraclecond}.
			
			Suppose $s\leq t\leq r$. Then applying a relativized version of Corollary~\ref{cor:trunc} and Observation~\ref{obs:oraclecond},
			\begin{align*}
				K^A_t(z)&\leq K^A_r(z)\\
				&\leq K^A(z\uhr r)+K(r)+O(1)\\
				&\leq K(z\uhr r\,|\,\sigma)+K(r)+O(1)\,.
			\end{align*}
			
			There exists a Turing machine $M_1$ that, on input $(\pi,\sigma)$, for $\pi\in\{0,1\}^*$, simulates $U(\sigma,U(\pi,\sigma))$. If $\pi$ is a witness to $K(z\uhr s\,|\,\sigma)$, then
			\[ M(\pi,\sigma)=U(\sigma,U(\pi,\sigma))=U(\sigma,z\uhr s)=z\uhr r\,.\]
			Thus, $K(z\uhr r\,|\,\sigma)\leq K(z\uhr s\,|\,\sigma) + c_{M_1}$, where $c_{M_1}$ is a constant for the description length of $M_1$. We now have
			\begin{align*}
				K^A_t(z)&\leq K(z\uhr s\,|\,\sigma)+K(r)+O(1)\\
				&\leq K(z\uhr s)+K(r)\\
				&\leq K_s(z)+2K(r)+O(1)\\
				&\leq \eta r+2K(r)+K(s)+O(1)\,.
			\end{align*}
			For the other direction, since $K_t^A(z)\geq K_s^A(z)$ whenever $t\geq s$, it is sufficient to show that $K_s^A(z)\geq\eta r$. We use Corollary~\ref{cor:trunc}, Observation~\ref{obs:oraclecond}, and the symmetry of information:
			\begin{align*}
				K_s^A(z)
				&\geq K^A(z\uhr s)-K(s)-O(1)\\
				&\geq K(z\uhr s\,|\,\sigma)-K(s)-O(1)\\
				&\geq K(z\uhr r\,|\,\sigma)-K(s)-O(1)\\
				&\geq K(z\uhr r)-K(\sigma)-K(s)-O(1)\\
				&= K(z\uhr r)-K(z\uhr r\,|\,z\uhr s)-K(s)-O(1)\\
				&\geq K(z\uhr r,z\uhr s)-K(z\uhr r\,|\,z\uhr s,K(z\uhr s))-K(K(z\uhr s))-2K(s)-O(1)\\
				&= K(z\uhr s)-K(K(z\uhr s))-2K(s)-O(1)\\
				&\geq K_s(z)-K(K(z\uhr s))-3K(s)-O(1)\\
				&=K_s(z)-O(\log r)\,.
			\end{align*}
			Since $K_s(z)\geq \eta r$, property (i) holds in this case.
			
			Now suppose instead that $t\leq s\leq r$. We again use Corollary~\ref{cor:trunc}, Observation~\ref{obs:oraclecond}, and the symmetry of information.
			\begin{align*}
				K_t^A(z)=&K^A(z\uhr t)-K(t)-O(1)\\
				=&K(z\uhr t\,|\,\sigma)-K(t)-O(1)\\
				\geq& K(z\uhr t\,|\,\sigma, K(\sigma))-K(t)-O(1)\\
				=&K(\sigma\,|\,z\uhr t,K(z\uhr t))+K(z\uhr t)-K(\sigma)-K(t)-O(1)\\
				\geq& K(\sigma\,|\,z\uhr t)-K(K(z\uhr t))+K(z\uhr t)-K(\sigma)-K(t)-O(1)\\
				\geq& K(\sigma\,|\,z\uhr s,t)-K(K(z\uhr t))+K(z\uhr t)-K(\sigma)-K(t)-O(1)\\
				\geq& K(z\uhr t)+K(\sigma\,|\,z\uhr s,K(z\uhr s))-K(\sigma)-K(K(z\uhr t))-2K(t)-O(1)\\
				=& K(z\uhr t)+K(z\uhr s\,|\,\sigma,K(\sigma))-K(z\uhr s)-K(K(z\uhr t))-2K(t)-O(1)\\
				\geq& K(z\uhr t)+K(z\uhr s\,|\,\sigma)-K(z\uhr s)-K(K(\sigma))-K(K(z\uhr t))\\&-2K(t)-O(1)\\
				\geq& K_t(z)+K^A_s(z)-K_s(z)-K(K(\sigma))-K(K(z\uhr t))\\&-3K(t)-2K(s)-O(1)\\
				=&K_t(z)+K^A_s(z)-K_s(z)-O(\log r)\,.
			\end{align*}
			As we have already shown that $K_s^A(z)-K_s(z)=O(\log r)$, we conclude that property (i) holds in this case as well.
			
			For property (ii), we again apply Corollary~\ref{cor:trunc}, relativized to $(z,A)$, and Observation~\ref{obs:oraclecond}, relativized to $z$, to see that
			\begin{align*}
				K_t^{z,A}(y)&\geq K^{z,A}(y\uhr t)-K(t)-O(1)\\
				&= K^z(y\uhr t\,|\,\sigma)-K(t)-O(1)\\
				&\geq K^z(y\uhr t)-K^z(\sigma)-K(t)-O(1)\\
				&\geq K_t^z(y)-K^z(\sigma)-2K(t)-O(1)\\
				&\geq K_t^z(y)-K(\sigma\,|\,z\uhr r)-2K(t)-O(1)\,,
			\end{align*}
			where the last inequality is due to Lemma~\ref{lem:AIPKSCD:4}.
			We argue that $K(\sigma\,|\,z\uhr r)$ is at most logarithmic in $r$.
			\begin{align*}
				K(\sigma\,|\,z\uhr r)&\leq K(\sigma,s,\ell(\sigma)\,|\,z\uhr r)+O(1)\\
				&\leq K(\sigma\,|\,s,\ell(\sigma),z\uhr r)+K(s)+K(\ell(\sigma))+O(1)\\
				&\leq K(\sigma\,|\,s,\ell(\sigma),z\uhr r)+O(\log r)\,.
			\end{align*}
			
			To see that the first term is constant, define a Turing machine $M_2$ that does the following. Given input $(j,k,x)$, $M_2$ simulates, for every $\pi\in\{0,1\}^{k}$ in parallel, $U(\pi,x\uhr j)$. It outputs the first such $\pi$ whose simulation halts with output $x$. We defined $\sigma$ in such a way that $M_2^z(s,\ell(\sigma),z\uhr r)=\sigma$,  so
			\[K(\sigma\,|\,s,\ell(\sigma),z\uhr r)\leq c_{M_2}\,,\]
			where $c_{M_2}$ is a constant for the length  of $M_2$'s description.
			We conclude that $K(\sigma\,|\,z\uhr r)=O(\log r)$, so $K_t^{z,A}(y)\geq K_t^{z}(y)-O(\log r)$.
			
			The argument for conditional complexity is essentially identical. By a relativized version of Corollary~\ref{cor:trunccond} and Observation~\ref{obs:oraclecond},
			\begin{align*}
				K_{t,r}^{A}(y|z)&\geq K^{z,A}(y\uhr t\,|\,z\uhr r)-K(t)-O(1)\\
				&= K(y\uhr t\,|\,z\uhr r,\sigma)-K(t)-O(1)\\
				&\geq K(y\uhr t\,|\,z\uhr r)-K(\sigma\,|\,z \uhr r)-K(t)-O(1)\\
				&\geq K_{t,r}(y|z)-K(\sigma\,|\,z \uhr r)-2K(t)-O(1)\\
				&\geq K_{t,r}(y|z)-K(\sigma\,|\,z \uhr r)-O(\log r)\,,
			\end{align*}
			and we have already shown that $K(\sigma\,|\,z\uhr r)= O(\log r)$.
		\end{proof}
\end{document}